\newcommand{\Q}{\ensuremath{\mathbb{Q}}}
\newcommand{\R}{\ensuremath{\mathbb{R}}}
\newcommand{\Z}{\ensuremath{\mathbb{Z}}}
\newcommand{\set}[1]{\{{#1}\}}
\newcommand{\floor}[1]{\lfloor{#1}\rfloor}
\newcommand{\ceil}[1]{\lceil{#1}\rceil}
\DeclareMathOperator{\poly}{poly}
\DeclareMathOperator{\polylog}{polylog}
\DeclareMathOperator*{\E}{E}
\theoremstyle{plain}            
\newtheorem{theorem}{Theorem}[section]
\newtheorem{lemma}[theorem]{Lemma}
\newtheorem{corollary}[theorem]{Corollary}
\theoremstyle{definition}       
\newtheorem{definition}[theorem]{Definition}
\theoremstyle{remark}           
\numberwithin{equation}{section}
\newif\ifnotes\notesfalse
\newcommand{\dnote}[1]{{\bf (Daniel:} {#1}{\bf ) }}
\newcommand{\cnote}[1]{{\bf (Chris:} {#1}{\bf ) }}
\newcommand{\snote}[1]{{\bf (Santosh:} {#1}{\bf ) }}
\newcommand{\dnote}[1]{}
\newcommand{\cnote}[1]{}
\newcommand{\snote}[1]{}
\def\eps{\epsilon}
\def\intr{\mathrm{int}}
\def\vx{{\bf x}}
\def\vy{{\bf y}}
\def\vN{{\bf N}}
\DeclareMathOperator*{\vol}{vol}
\newcommand{\pr}[2]{\left\langle #1, #2 \right\rangle }
\title{A $O(1/\eps^2)^n$-time Sieving Algorithm for Approximate Integer Programming}
\author{Daniel Dadush\thanks{H. Milton Stewart School of Industrial and Systems Engineering, Georgia Institute of Technology, 765 Ferst Drive NW, Atlanta, GA 30332-0205, USA {\tt dndadush@gatech.edu}}}
\begin{document}

\maketitle

\begin{abstract}
  \ifnotes \begin{center}{\Huge{NOTES ARE ON}}\end{center} \fi

The Integer Programming Problem (IP) for a polytope $P \subseteq \R^n$ is to find an integer point in $P$ or decide that
$P$ is integer free. We give a randomized algorithm for an approximate version of this problem, which correctly decides
whether $P$ contains an integer point or whether a $(1+\eps)$ scaling of $P$ around its barycenter is integer free in
time $O(1/\eps^2)^n$ with overwhelming probability. We reduce this approximate IP question to an approximate Closest
Vector Problem (CVP) in a ``near-symmetric'' semi-norm, which we solve via a randomized sieving technique first
developed by Ajtai, Kumar, and Sivakumar (STOC 2001). Our main technical contribution is an extension of the AKS sieving
technique which works for any near-symmetric semi-norm. Our results also extend to general convex bodies and lattices.

\end{abstract}

\textbf{Keywords.}  Integer Programming, Shortest Vector Problem, Closest Vector Problem.




\section{Introduction}
\label{sec:introduction}

The Integer Programming (IP) Problem, i.e. the problem of deciding whether a polytope contains an integer point, is a
classic problem in Operations Research and Computer Science. Algorithms for IP were first developed in the 1950s when
Gomory \cite{gomory-ip58} gave a finite cutting plane algorithm to solve general (Mixed)-Integer Programs. However, the first
algorithms with complexity guarantees (i.e. better than finiteness) came much later. The first such algorithm was the
breakthrough result of Lentra \cite{lenstra83:_integ_progr_with_fixed_number_of_variab}, which gave the first fixed
dimension polynomial time algorithm for IP. Lenstra's approach revolved on finding ``flat'' integer directions of a
polytope, and achieved a leading complexity term of $2^{O(n^3)}$ where $n$ is the number of variables. Lenstra's
approach was generalized and substantially improved upon by Kannan
\cite{kannan87:_minkow_convex_body_theor_and_integ_progr}, who decreased the complexity to $O(n^{2.5})^n$. Recently,
Dadush et al \cite{Dadush-et-al-SVP-11} improved this complexity to $\tilde{O}(n^{\frac{4}{3}})^n$ by using a
solver for the Shortest Vector Problem (SVP) in general norms. Following the works of Lenstra and Kannan, fixed
dimension polynomial algorithms were discovered for counting the number of integer points in a rational polyhedron
\cite{B94}, parametric integer programming \cite{Kannan-Test-Sets-90, Eisenbrand-Shmonin-ParamIP-08}, and integer
optimization over quasi-convex polynomials \cite{Heinz2005, arxiv/HildebrandK10}. However, over the last twenty years
the known algorithmic complexity of IP has only modestly decreased. A central open problem in the area therefore remains
the following:

\paragraph{\emph{\large Question:}} Does there exist a $2^{O(n)}$-time algorithm for Integer Programming? 
\vspace{1em}

In this paper, we show that the answer to this question is affirmative as long as we are willing to accept an
approximate notion of containment. More precisely, we give a randomized algorithm which can correctly distinguish
whether a polytope $P$ contains an integer point or if a small ``blowup'' of $P$ contains no integer points in
$O(1/\eps^2)^n$ time with overwhelming probability. Our results naturally extend to setting of general convex bodies and
lattices, where the IP problem in this context is to decide for a convex body $K$ and lattice $L$ in $\R^n$ whether $K
\cap L = \emptyset$. To obtain the approximate IP result, we reduce the problem to a $(1+\eps)$-approximate Closest
Vector Problem (CVP) under a ``near-symmetric'' semi-norm.

Given a lattice $L \subseteq \R^n$ (integer combinations of a basis $b_1,\dots,b_n \in \R^n$) the SVP is to find
$\min_{y \in L \setminus \set{0}} \|y\|$, and given $x \in \R^n$ the CVP is to find $\min_{y \in L} \|y-x\|$,
where $\|\cdot\|$ is a given (semi-)norm. A semi-norm $\|\cdot\|$ satisfies all norm properties except symmetry, i.e.
we allow $\|x\| \neq \|-x\|$. 

Our methods in this setting are based on a randomized sieving technique first developed by Ajtai, Kumar and Sivakumar
\cite{DBLP:conf/stoc/AjtaiKS01, DBLP:conf/coco/AjtaiKS02} for solving the Shortest (SVP) and Closest Vector Problem
(CVP). In \cite{DBLP:conf/stoc/AjtaiKS01}, they give a $2^{O(n)}$ sieving algorithm for SVP in the $\ell_2$ norm,
extending this in \cite{DBLP:conf/coco/AjtaiKS02} to give a $2^{O(\frac{1}{\eps})}$-time algorithm for $(1+\eps)$-CVP in
the $\ell_2$ norm. In \cite{DBLP:conf/icalp/BlomerN07}, a sieve based $2^{O(n)}$-time algorithm for SVP and
$O(1/\eps^2)^n$-time algorithm for $(1+\eps)$-CVP in any $\ell_p$ norm is given. In \cite{DBLP:conf/fsttcs/ArvindJ08},
the previous results are extended to give a $2^{O(n)}$-time SVP algorithm in any norm (though not semi-norm). In
\cite{conf/socg/EisenbrandHN11}, a technique to boost any $2$-approximation algorithm for $\ell_\infty$ CVP is given
which yields a $O(\ln(\frac{1}{\eps}))^n$ algorithm for $(1+\eps)$-CVP under $\ell_\infty$. Our main technical
contribution is an extension of the AKS sieving technique to give a $O(1/\eps^2)^n$ algorithm for CVP under \emph{any
near-symmetric semi-norm}.

\subsection{Definitions}
\label{sec:definitions}
In what follows, $K \subseteq \R^n$ will denote a convex body (a full dimensional compact convex set) and $L \subseteq
\R^n$ will denote an $n$-dimensional lattice (all integer combinations of a basis of $\R^n$). $K$ will be presented by a
membership oracle in the standard way (see section \ref{sec:prelims}), and $L$ will be presented by a generating basis
$b_1,\dots,b_n \in \R^n$. We define the barycenter (or centroid) of $K$ as $b(K) = \frac{1}{\vol(K)} \int_K x dx$.

For sets $A,B \subseteq \R^n$ and scalars $s,t \in \R$ define the Minkowski Sum $sA + tB = \set{sa + tb: a \in A, b
\in B}$. $\intr(A)$ denotes the interior of the set $A$.

Let $C \subseteq \R^n$ be a convex body where $0 \in \intr(C)$. Define the semi-norm induced
by $C$ (or gauge function of $C$) as $\|x\|_C = \inf \set{s \geq 0: x \in sC}$ for $x \in \R^n$.
$\|\cdot\|_C$ satisfies all norm properties except symmetry, i.e. $\|x\|_C \neq \|-x\|_C$ is allowed. 
$\|\cdot\|_C$ (or $C$) is $\gamma$-symmetric, for $0 < \gamma \leq 1$, if $\vol(C \cap -C) \geq \gamma^n \vol(C)$.
Note $C$ is $1$-symmetric iff $C = -C$. 

For a lattice $L$ and semi-norm $\|\cdot\|_C$, define the first minimum of $L$ under $\|\cdot\|_C$ as $\lambda_1(C,L) =
\inf_{z \in L \setminus \set{0}} \|z\|_C$ (length of shortest non-zero vector). For a target $x$, lattice $L$, and
semi-norm $\|\cdot\|_C$, define the distance from $x$ to $L$ under $\|\cdot\|_C$ as $d_C(L,x) = \inf_{z \in L}
\|z-x\|_C$.

\subsection{Results}
\label{sec:results}

We state our main result in terms of general convex bodies and lattices. We recover the standard integer programming
setting by setting $L = \Z^n$, the standard integer lattice, and $K = \set{x \in \R^n: Ax \leq b}$, a general
polytope. For simplicity, we often omit standard polynomial factors from the runtimes of our algorithms (i.e.
$\polylog$ terms associated with bounds on $K$ or the bit length of the basis for $L$).

Our main result is the following:

\begin{theorem}[Approximate IP Feasibility] 
\label{thm:approx-ip-intro}
For $0 < \eps \leq \frac{1}{2}$, there exists a $O(1/\eps^2)^n$ time algorithm which with probability at least
$1-2^{-n}$ either outputs a point
\[
y \in (1+\eps) K - \eps b(K) ~\cap~ L 
\]
or decides that $K ~\cap~ L = \emptyset$. Furthemore, if 
\[
\frac{1}{1+\eps} K + \frac{\eps}{1+\eps} b(K) \cap L \neq \emptyset \text{,}
\]
the algorithm returns a point $z \in K \cap L$ with probability at least $1-2^{-n}$.
\end{theorem}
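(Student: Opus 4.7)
The plan is to reduce approximate IP feasibility to $(1+\eps)$-CVP in a near-symmetric semi-norm and then invoke the paper's main CVP theorem. First I would center the body at its barycenter by setting $C := K - b(K)$. Then for any $y \in \R^n$, we have $y \in K$ iff $\|y - b(K)\|_C \leq 1$; more generally $y \in (1+\eps)K - \eps b(K)$ iff $\|y - b(K)\|_C \leq 1+\eps$, and $y \in \frac{1}{1+\eps}K + \frac{\eps}{1+\eps}b(K)$ iff $\|y - b(K)\|_C \leq \frac{1}{1+\eps}$. So the whole question reduces to approximating $d_C(L, b(K))$ to within a factor of $1+\eps$.

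The crucial geometric ingredient, without which the reduction fails, is that $C$ is near-symmetric in the sense defined in Section~\ref{sec:definitions}. This is where one invokes the classical fact that a convex body centered at its centroid is automatically close to symmetric: by a result of Milman--Pajor, $\vol(C \cap -C) \geq 2^{-n}\vol(C)$, so $C$ is $\frac{1}{2}$-symmetric independently of $n$. This allows us to legally feed $C$ into the main CVP algorithm.

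I would then run the $(1+\eps)$-CVP algorithm with lattice $L$, target $b(K)$, and semi-norm $\|\cdot\|_C$, in $O(1/\eps^2)^n$ time; with overwhelming probability it returns a $y \in L$ satisfying $\|y - b(K)\|_C \leq (1+\eps)\, d_C(L, b(K))$. If $\|y - b(K)\|_C \leq 1+\eps$ output $y$, else declare $K \cap L = \emptyset$. Correctness is immediate from the equivalences above: if $K \cap L \neq \emptyset$ then $d_C(L, b(K)) \leq 1$, so $y$ lies in $(1+\eps)K - \eps b(K)$; and if $\frac{1}{1+\eps}K + \frac{\eps}{1+\eps}b(K)$ meets $L$, then $d_C(L, b(K)) \leq \frac{1}{1+\eps}$, so the approximation factor gives $\|y - b(K)\|_C \leq 1$ and $y$ actually lies in $K \cap L$.

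The main obstacle I foresee is that $b(K)$ cannot be computed exactly from a membership oracle for $K$. One must substitute a sufficiently accurate approximate centroid $\tilde{b}$ (obtainable in polynomial time by standard sampling techniques) and argue that the induced perturbation of the gauge is absorbed into the $(1+\eps)$ slack. Concretely, since $C$ is $\frac{1}{2}$-symmetric about the origin, any estimate $\tilde{b}$ that is within a $\poly(\eps)$ fraction of the inradius of $C$ around $b(K)$ alters $\|\cdot\|_C$ multiplicatively by $1 \pm O(\eps)$, which can be folded into a slightly smaller working $\eps$. Aside from this technicality and a routine rescaling to normalize the lattice against $C$, the heavy lifting sits entirely in the sieving-based CVP algorithm that constitutes the main technical contribution of the paper.
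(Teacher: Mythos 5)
Your proposal matches the paper's proof in all essentials: reduce feasibility to $(1+\eps)$-CVP under the gauge of the recentered body, invoke Milman--Pajor (Theorem~\ref{thm:symmetrize}) for near-symmetry, and absorb the error from the inexact barycenter into the approximation slack, which is exactly what the paper does via Lemma~\ref{lem:barycenter-approx}, Lemma~\ref{lem:cent-comp}, and Corollary~\ref{cor:sym-stable}. The only cosmetic difference is that the paper runs CVP in the norm $\|\cdot\|_{K-b}$ centered at the \emph{approximate} barycenter $b$ and needs only constant accuracy $\|\pm(b-b(K))\|_{K-b(K)}\le\frac{1}{3}$ (because the recentering error in Lemma~\ref{lem:cent-comp} is proportional to $|1-\|\cdot\||$, which is already $O(\eps)$ at the thresholds that matter), rather than your $\poly(\eps)$-accurate estimate of $b(K)$ with a multiplicative perturbation of the gauge; both routes work after the same constant-juggling.
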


The above theorem substantially improves the complexity of IP in the case where $K$ contains a ``deep'' lattice point
(i.e. within a slight scaling of $K$ around its barycenter). Compared to exact algorithms, our methods are competitive
or faster as long as 
\[
\frac{1}{1+n^{-1/2}} K + \frac{n^{-1/2}}{1+n^{-1/2}} b(K) \cap L \neq \emptyset \text{,}
\]
where we achieve complexity $O(n)^n$ (which is the conjectured complexity of the IP algorithm in
\cite{Dadush-et-al-SVP-11}). Hence to improve the complexity of IP below $O(n^{\delta})^n$, for any $0 < \delta <
1$, one may assume that all the integer points lie close to the boundary, i.e. that 
\[
\frac{1}{1+n^{-\frac{1}{2}\delta}}K + \frac{n^{-\frac{1}{2}\delta}}{1+n^{-\frac{1}{2}\delta}}b(K) \cap L = \emptyset \text{.}
\]
The above statement lends credence to the intuition that exact IP is hard because of lattice points lying very near the
boundary.

Starting with the above algorithm, we can use a binary search procedure to go from approximate feasibility to
approximate optimization. This yields the following theorem:

\begin{theorem}[Approximate Integer Optimization]
\label{thm:approx-ip-opt-intro}
For $v \in \R^n$, $0 < \eps \leq \frac{1}{2}$, $\delta > 0$, there exists a $O(1/\eps^2)^n
\polylog(\frac{1}{\delta},\|v\|_2)$ algorithm which with probability at least $1-2^{-n}$ either outputs a point 
\[
y \in K + \eps(K-K) ~\cap~ L
\]
such that
\[
\sup_{z \in K \cap L} \pr{v}{z} \leq \pr{v}{y} + \delta
\]
or correctly decides that $K ~\cap~ L = \emptyset$.
\end{theorem}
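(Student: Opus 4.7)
The plan is to reduce approximate integer optimization to a binary search on the objective value $\pr{v}{x}$, using Theorem~\ref{thm:approx-ip-intro} as a feasibility oracle at each step.

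First, I would compute bounds $m \le \min_{x \in K} \pr{v}{x}$ and $M \ge \max_{x \in K} \pr{v}{x}$ in polynomial time via the ellipsoid method on the membership oracle for $K$, so that $t^{*} := \sup_{z \in K \cap L} \pr{v}{z}$ lies in $[m, M]$ and $M - m = \poly(\|v\|_2, \mathrm{diam}(K))$. Then I would binary-search on $c \in [m, M]$ for $T = O(\log((M-m)/\delta)) = \polylog(\|v\|_2, 1/\delta)$ rounds, maintaining an interval $[\ell, u]$ (initially $[m,M]$) together with the best lattice point $y^{*}$ found so far. At each step, set $c := (\ell+u)/2$, form $K_c := K \cap \{x : \pr{v}{x} \ge c\}$ (accessed via the membership oracle for $K$ plus the halfspace test), and invoke Theorem~\ref{thm:approx-ip-intro} on $K_c$ with parameter $\eps$. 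A ``NO'' response certifies $K_c \cap L = \emptyset$ with probability $\ge 1 - 2^{-n}$, so $t^{*} < c$ and I set $u \leftarrow c$. Otherwise the returned point $y \in (1+\eps) K_c - \eps\, b(K_c) \cap L$ lies in $K + \eps(K - K) \cap L$ (using $K_c \subseteq K$ and $b(K_c) \in K$, so that $y = x + \eps(x - b(K_c)) \in K + \eps(K - K)$); I update $y^{*}$ if $\pr{v}{y}$ improves on the current best and set $\ell \leftarrow c$. At the end, output $y^{*}$ if defined, else report $K \cap L = \emptyset$.

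A union bound over the $T$ rounds plus standard amplification keeps the overall failure below $2^{-n}$. Each IP call costs $O(1/\eps^2)^n$, for a total of $O(1/\eps^2)^n \polylog(\|v\|_2, 1/\delta)$, matching the claimed runtime, and each returned point lies in $K + \eps(K - K) \cap L$ by construction.

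The main obstacle is the \emph{objective} analysis. At termination, $t^{*} < u$ and $u - \ell \le \delta$, so the largest $c$ at which a point was returned satisfies $c \ge t^{*} - \delta$; however, the naive bound on the returned blowup point gives only $\pr{v}{y^{*}} \ge (1+\eps)c - \eps\, \pr{v}{b(K_c)} \ge c - O(\eps)(M-m)$, which can fall short of $t^{*} - \delta$ when $\eps(M-m)$ exceeds $\delta$. To close this gap I would invoke the second (``shrunken'') guarantee of Theorem~\ref{thm:approx-ip-intro}: whenever $\tfrac{1}{1+\eps}K_c + \tfrac{\eps}{1+\eps}\, b(K_c)$ contains a lattice point, the algorithm is forced to return a point in $K_c \cap L$ itself, for which $\pr{v}{y} \ge c$ automatically. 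Arguing that this deep-point regime must be reached at some binary-search step with $c \ge t^{*} - \delta$ is the technical crux and is where the convex-geometric interplay between the feasibility slack $\eps$ and the required objective precision $\delta$ enters.
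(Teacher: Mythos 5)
Your setup (binary search on the objective value with Theorem~\ref{thm:approx-ip-intro} as a feasibility oracle) is the right framework, and you have correctly diagnosed the problem with it: restricting to the one-sided body $K_c = K \cap \{x : \pr{v}{x} \ge c\}$ gives a returned point only in $K_c + \eps(K_c - K_c)$, whose objective value can be as low as $c - \eps\,(\sup_K \pr{v}{x} - \inf_K \pr{v}{x})$, a loss that does not shrink with the binary search. But the repair you sketch does not work, and you leave it as an unresolved ``technical crux,'' so the proof is incomplete. The ``furthermore'' clause of Theorem~\ref{thm:approx-ip-intro} only forces an exact point of $K_c \cap L$ when a lattice point lies in the shrunken body $\frac{1}{1+\eps}K_c + \frac{\eps}{1+\eps}b(K_c)$; when $c$ is within $\delta$ of $t^* = \sup_{z \in K\cap L}\pr{v}{z}$, the witnessing lattice point lies essentially on the facet $\pr{v}{x} = c$ of $K_c$ (and may also sit on the boundary of $K$ itself, e.g.\ at a vertex), so it is never deep in $K_c$ and the deep-point regime is precisely \emph{not} reached in the rounds that matter. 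There is no way to argue it must be reached.

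The missing idea is to restrict to \emph{two-sided slabs} $K_{m,u} = K \cap \{x : m \le \pr{v}{x} \le u\}$, where $u$ is the current upper bound of the search interval. Then the returned point $y$ lies in $K_{m,u} + \eps(K_{m,u}-K_{m,u})$, and since the variation of $v$ over the slab is at most $u-m$, one gets $\pr{v}{y} \ge m - \eps(u-m)$, so the loss from the blowup is proportional to the \emph{current interval width} rather than to the full range of $v$ over $K$. This yields $u - \pr{v}{y} \le (1+\eps)(u-m) \le \frac34(u-l)$, i.e.\ geometric shrinkage of the gap, and after $O(\log(R\|v\|_2/\delta))$ rounds the best point found is within $\delta$ of $t^*$. (One also needs the branch where the upper slab $K_{m,u}$ is empty but the lower slab $K_{l,m}$ is not, and a check that each slab can be presented as a well-centered body to the feasibility oracle; these are routine.) Your containment argument $y \in K + \eps(K-K)$ and your runtime/union-bound accounting are fine and carry over unchanged to the slab version.
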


The above theorem states that if we wish to optimize over $K ~\cap~ L$, we can find a lattice point in a slight blowup
of $K$ whose objective value is essentially as good as any point in $K ~\cap~ L$. We remark that the blowup is
worse than in Theorem \ref{thm:approx-ip-intro}, since $(1+\eps)K - \eps x \subseteq K + \eps(K-K)$ for any $x \in
K$. This stems from the need to call the feasibility algorithm on multiple restrictions of $K$. To give a clearer
understanding of this notion, the new constraints of the ``blowup'' body can be understood from the following formula:
\[
\sup_{x \in K+\eps(K-K)} \pr{v}{x} = 
           \sup_{x \in K} \pr{v}{x} + \eps\left(\sup_{x \in K} \pr{v}{x} - \inf_{x \in K} \pr{v}{x}\right) \text{.}
\]
Hence each valid constraint $\pr{v}{x} \leq c$ for $K$, is relaxed by an $\eps$-fraction of its $v$'s
variation over $K$. 

\subsection{Main Tool}

We now describe the main tool used to derive both of the above algorithms. At the heart of Theorem
\ref{thm:approx-ip-intro}, is the following algorithm:

\begin{theorem} 
\label{thm:approx-cvp-intro}
Let $\|\cdot\|_C$ denote a $\gamma$-symmetric semi-norm. For $x \in \R^n$, $0 < \eps \leq \frac{1}{2}$, there exists an
$O(\frac{1}{\gamma^4 \eps^2})^n$ time algorithm which computes a point $y \in L$ satisfying
\[
\|y-x\|_C \leq (1+\eps) d_C(L,x)
\]
with probability at least $1-2^{-n}$. Furthermore, if $d_C(L,x) \leq t \lambda_1(C,L)$,
for $t \geq 2$, then an exact closest vector can be found in time $O(\frac{t^2}{\gamma^4})^n$
with probability at least $1-2^{-n}$.
\end{theorem}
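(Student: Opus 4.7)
The plan is to adapt the Ajtai--Kumar--Sivakumar sieve for the closest vector problem, in the form refined by Bl\"omer--Naewe for $\ell_p$-norms, to an arbitrary $\gamma$-symmetric semi-norm $\|\cdot\|_C$. Every cluster/cover operation is performed inside the symmetrized body $C\cap -C$, which is a genuine symmetric norm: the inequalities $\|z\|_{C\cap -C}\ge \|z\|_C$ and $\vol(C\cap -C)\ge \gamma^n\vol(C)$ are what one needs in order to translate volumetric covering estimates back and forth between $C$ and $C\cap -C$, at the cost of a factor of $\gamma^{-n}$ per translation.

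The algorithm has three phases. First, LLL together with a Babai-style nearest-plane rounding carried out inside the symmetrized norm produces an initial lattice point $y_0$ with $\|y_0-x\|_C\le R_0 := 2^{O(n)}d$, where $d:=d_C(L,x)$; since $d$ is unknown, the remaining stages are run in parallel over $\poly(n)$ geometrically spaced guesses of $d$ so that at least one guess lies in $[d,(1+\eps/4)d]$. Second, for a fixed guess I would sample $N=(O(1/(\gamma^4\eps^2)))^n$ perturbations $p_i$ uniformly from $R_0(C\cap -C)$ and use Babai rounding to attach to each $p_i$ a lattice vector $v_i$ with $\|v_i-(x+p_i)\|_C$ small compared with $\eps d$. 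Third, the sieve iteratively halves the allowed perturbation norm: it covers the ball $R(C\cap -C)$ by $(O(1/\gamma))^n$ translates of $(R/2)(C\cap -C)$, and in each cluster it fixes a representative $(p^\star,v^\star)$ and replaces every remaining pair $(p_i,v_i)$ in the cluster by $(p_i-p^\star,\ v_i-v^\star)$; the difference $v_i-v^\star$ lies in $L$ and is now attached to a perturbation of norm at most $R/2$. After $O(\log(R_0/(\eps d)))$ rounds the surviving $p_i$'s have norm $O(\eps d)$, so the corresponding $v_i$'s are lattice points within $(1+O(\eps))d$ of $x$.

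The main obstacle is showing both that enough pairs survive the sieve and that a random survivor is near-optimal with probability $1-2^{-n}$. Each round loses at most one pair per cluster, i.e.\ $(O(1/\gamma))^n$ pairs, and there are $O(n+\log(1/\eps))$ rounds, so ensuring a nonempty final pool is straightforward. The delicate step is arguing that the final $v_i$'s are well distributed over near-optimal lattice vectors: one has to follow the AKS/Regev symmetry-of-perturbations argument, in which each final pair is coupled with a freshly resampled perturbation of statistically close distribution, and then check that the coupling survives the several times one has to switch between $C$ and $C\cap -C$ (in the initial sampling distribution, in the cover size estimate, and in the decoding step). These switches combine to produce the $\gamma^{-4n}$ factor, while the $\eps^{-2n}$ factor comes from the standard density-of-lattice-points-in-a-thin-ball argument.

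For the ``furthermore'' claim, given $d\le t\lambda_1(C,L)$ with $t\ge 2$, I would invoke the approximate algorithm with $\eps':=1/(5t)$ in time $(O(t^2/\gamma^4))^n$ to obtain $y\in L$ with $\|y-x\|_C\le (1+1/(5t))d\le d+\lambda_1(C,L)/5$; any exact closest vector then lies within $O(t)\lambda_1(C,L)$ of $y$ in the symmetrized norm, so one can apply an AKS-style semi-norm SVP enumeration to the shifted lattice $L-y$ to list all such candidates---at most $(O(t))^n$ of them by a standard packing bound---within the same time budget, and return the minimum.
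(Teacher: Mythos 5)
Your overall architecture (AKS sieve with perturbations, guessing the distance, halving rounds, a coupling argument for the output distribution) is the right family of ideas, but the specific way you handle the asymmetry breaks the proof at its most delicate point. You sample the perturbations from scalings of $C\cap -C$ and do all metric control there. The coupling/``symmetry-of-perturbations'' step requires that the perturbation region at scale $\beta\approx d_C(L,x)$ have a $2^{-O(n)}$-fraction-volume intersection with its translate by the optimal displacement $v$. But $\gamma$-symmetry is only a volumetric condition: a single vector $v$ with $\|v\|_C=d$ can have $\|v\|_{C\cap -C}$ (equivalently $\|v\|_{-C}$) larger by an arbitrary factor --- e.g.\ for $C=[-1,1]^{n-1}\times[-1,2^n]$, which is $\tfrac12$-symmetric, the last coordinate direction has $\|v\|_{-C}=2^n\|v\|_C$. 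Then $\beta(C\cap -C)\cap\bigl(v-\beta(C\cap -C)\bigr)=\emptyset$, no perturbation is ever ``good,'' and the success probability of your sieve is $0$, not $1-2^{-n}$. This is exactly the obstruction the paper's construction is built to avoid: it samples perturbations from $\beta(C\cup -C)$ (a uniform point of $\beta C$ with a random sign flip), tracks the quantity $\|\cdot\|_C^*=\min\{\|\cdot\|_C,\|\cdot\|_{-C}\}$ through a modified clustering lemma whose packing estimate compares $\tfrac{5}{4}D(C\cup -C)$ against $\tfrac{D}{4}(C\cap -C)$ (this is where $\gamma^{-n}$ enters, not in covering $C\cap -C$ by itself, which costs only $O(1)^n$), and proves an intersection lemma for the two lenses $C_v^+=\beta C\cap(v-\beta C)$ and $C_v^-=(\beta C-v)\cap -\beta C$, each of relative volume $\ge(\gamma/4)^n$ in $\beta C$. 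The swapping bijection $f_v$ then moves mass between $C_v^+$ and $C_v^-$ rather than within a single symmetric lens. (Structurally, the paper also routes CVP through the Subspace Avoiding Problem in dimension $n+1$ rather than sieving on the coset directly; that difference is legitimate and not the issue.)

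The ``furthermore'' part has a second gap of the same origin. From $\|v^*-x\|_C\le d$ and $\|y-x\|_C\le(1+\eps')d$ you cannot conclude that $\|v^*-y\|_{C\cap -C}=O(t)\lambda_1(C,L)$: writing $v^*-y=(v^*-x)-(y-x)$, the triangle inequality charges you $\|y-x\|_{-C}$, which is uncontrolled for a semi-norm, so your enumeration radius and the $(O(t))^n$ candidate count do not follow; moreover, getting AKS to \emph{list all} lattice points in such a region is itself a nontrivial task. The paper's exact version avoids enumeration entirely: its output analysis produces a vector $w$ with $\|w-v\|_C<\eps\lambda$ where $v$ is an exact minimizer, so running the approximate algorithm with $\eps=1/t$ gives $\|w-v\|_C<\lambda_1(C,L)$ and hence $w=v$ outright.
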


The above algorithm adapts the AKS sieve to work for general semi-norms. As mentioned previously
\cite{DBLP:conf/icalp/BlomerN07} gave the above result for $\ell_p$ norms, and \cite{DBLP:conf/fsttcs/ArvindJ08} gave a
$2^{O(n)}$-time exact SVP solver for all norms (also implied by the above since SVP $\leq$ CVP, see
\cite{DBLP:journals/ipl/GoldreichMSS99}). In \cite{Dadush-et-al-SVP-11}, a Las Vegas algorithm (where only the runtime is
probabilistic, not the correctness) is given for the exact versions of the above results (i.e. where an exact closest /
shortest vector is found) with similar asymptotic complexity using completely different techniques,

Hence compared with previous results, the novelty of the above algorithm is the extension of the AKS sieving technique
for $(1+\eps)$-CVP in general semi-norms. As seen from Theorems~\ref{thm:approx-ip-intro} and
~\ref{thm:approx-ip-opt-intro}, the significance of this extension is in its direct applications to IP. Furthermore,
we believe our results illustrate the versatility of the AKS sieving paradigm.

From a high level, our algorithm uses the same framework as \cite{DBLP:conf/icalp/BlomerN07,DBLP:conf/fsttcs/ArvindJ08}.
We first show that the AKS sieve can be used to solve the Subspace Avoiding Problem (SAP), which was first defined in
\cite{DBLP:conf/icalp/BlomerN07}, and use a reduction from CVP to SAP to get the final result. The technical challenge
we overcome, is finding the correct generalizations of the each of the steps performed in previous algorithms to the
asymmetric setting. We discuss this further in section \ref{sec:sap}.

\subsection{Organization}

In section \ref{sec:prelims}, we give some general background in convex geometry and lattices. In section \ref{sec:ip},
we describe the reductions from Approximate Integer Programming to Approximate CVP as well as Approximate Integer
Optimization to Approximate Integer Programming. In section \ref{sec:sap}, we present the algorithm for the Subspace
Avoiding Problem, and in section \ref{sec:cvp} we give the reduction from CVP to SAP. In section \ref{sec:conclusion},
we present our conclusions and open problems.


\section{Preliminaries}
\label{sec:prelims}

\paragraph{Computation Model:} 
A convex body $K \subseteq \R^n$ if $(a_0,r,R)$-centered if $a_0 + rB_2^n \subseteq K \subseteq a_0 + RB_2^n$, where
$B_2^n$ is the unit euclidean ball. All the convex bodies in this paper will be $(a_0,r,R)$-centered unless otherwise
specified. To interact with $K$, algorithms are given access to a membership oracle for $K$, i.e. an oracle $O_K$ such
that $O_K(x) = 1$ if $x \in K$ and $0$. In some situations, an exact membership oracle is difficult to implement (e.g.
deciding whether a matrix $A$ has operator norm $\leq 1$), in which situation we settle for a ``weak''-membership
oracle, which only guarantees its answer for points that are either $\eps$-deep inside $K$ or $\eps$-far from $K$ (the
error tolerance $\eps$ is provided as an input to the oracle).

For a $(0,r,R)$-centered $K$ the gauge function $\|\cdot\|_K$ is a semi-norm. To interact with a semi-norm, algorithms
are given a distance oracle, i.e. a function which on input $x$ returns $\|x\|_K$. It is not hard to check that given a
membership oracle for $K$, one can compute $\|x\|_K$ to within any desired accuracy using binary search. Also we remember
that $\|x\|_K \leq 1 \Leftrightarrow x \in K$, hence a distance oracle can easily implement a membership oracle. All the
algorithms in this paper can be made to work with weak-oracles, but for simplicity in presentation, we assume that
our oracles are all exact and that the conversion between different types of oracles occurs automatically. We note that when
$K$ is a polytope, all the necessary oracles can be implemented exactly and without difficulty.

In the oracle model of computation, complexity is measured by the number of oracles calls and arithmetic operations.

\paragraph{Probability:}
For random variables $X,Y \in \Omega$, we define the total variation distance between $X$ and $Y$ as
\[
d_{TV}(X,Y) = \sup_{A \subseteq \Omega} |\Pr(X \in A)-\Pr(Y \in A)|
\]
The following lemma is a standard fact in probability theory:
\begin{lemma} 
\label{lem:tvd}
Let $(X_1,\dots,X_m) \in \Omega^m$ and $(Y_1,\dots,Y_m) \in \Omega^m$ denote independent random variables variables
satisfying $d_{TV}(X_i,Y_i) \leq \eps$ for $i \in [m]$. Then
\[
d_{TV}((X_1,\dots,X_m),(Y_1,\dots,Y_m)) \leq m \eps
\] 
\end{lemma}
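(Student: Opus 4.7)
My plan is to prove this via a standard hybrid argument, exploiting two basic properties of the total variation distance: that it is a metric (so satisfies the triangle inequality), and that for joint distributions of independent random variables the TV distance is controlled coordinate-wise when we replace one coordinate at a time.

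First I would define the sequence of hybrid distributions $H_i = (Y_1,\dots,Y_i,X_{i+1},\dots,X_m)$ for $i = 0,1,\dots,m$, so that $H_0 = (X_1,\dots,X_m)$ and $H_m = (Y_1,\dots,Y_m)$. Applying the triangle inequality telescopically gives
\[
d_{TV}(H_0,H_m) \;\leq\; \sum_{i=1}^{m} d_{TV}(H_{i-1},H_i),
\]
so it suffices to show $d_{TV}(H_{i-1},H_i) \leq \eps$ for each $i$.

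For the per-coordinate bound I would note that $H_{i-1}$ and $H_i$ differ only in their $i$-th coordinate (which is $X_i$ in the former and $Y_i$ in the latter), while the other coordinates agree in distribution and, crucially, are independent of the $i$-th. Writing the probability of any event $A \subseteq \Omega^m$ under $H_{i-1}$ and $H_i$ via Fubini and subtracting, the shared factor coming from the other $m-1$ independent coordinates pulls out and the remaining difference is bounded by $d_{TV}(X_i,Y_i) \leq \eps$; taking the supremum over $A$ gives $d_{TV}(H_{i-1},H_i) \leq \eps$. Summing yields the claimed $m\eps$ bound.

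There is no substantive obstacle here, as this is a textbook fact; the only minor point to verify carefully is the per-coordinate inequality, which can equivalently be argued by a coupling: pick optimal couplings $(X_i',Y_i')$ with $\Pr(X_i' \neq Y_i') = d_{TV}(X_i,Y_i) \leq \eps$ for each $i$, make them independent across $i$, and apply a union bound to get $\Pr((X_1',\dots,X_m') \neq (Y_1',\dots,Y_m')) \leq m\eps$, which upper bounds $d_{TV}((X_1,\dots,X_m),(Y_1,\dots,Y_m))$ by the coupling characterization of TV distance. Either route delivers the same conclusion.
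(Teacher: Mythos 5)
Your proof is correct. The paper does not actually prove this lemma---it is stated as ``a standard fact in probability theory'' and used without justification---so there is no in-paper argument to compare against; your hybrid/telescoping argument (and the alternative coupling argument you sketch) is exactly the standard way to establish it. The one point worth being explicit about, which you gesture at, is that each hybrid $H_i$ is by definition the \emph{product} of the relevant marginals, so consecutive hybrids are product measures differing in a single factor, and the per-coordinate bound $d_{TV}(\mu\times\nu,\mu\times\nu')=d_{TV}(\nu,\nu')$ applies; with that noted, the argument is complete.
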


\paragraph{Algorithms on Convex Bodies:}

For the purposes of our sieving algorithm, we will need an algorithm to sample uniform points from $K$.
The following result of \cite{DyerFK89} provides the result:

\begin{theorem}[Uniform Sampler]
\label{thm:unif-sampler}
Given $\eta > 0$, there exists an algorithm which outputs a random point $X \in K$ whose distribution has total
variation distance at most $\eta$ from the uniform distribution on $K$, using at most $\poly(n, \ln(\frac{1}{\eta}),
\ln(\frac{R}{r}))$ calls to the oracle and arithmetic operations.
\end{theorem}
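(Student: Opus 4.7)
The plan is to construct a rapidly-mixing Markov chain on $K$ whose stationary distribution is uniform, run it from any interior starting point for polynomially many steps, and output the final state. A clean choice is the ball walk: maintain a current point $x \in K$, at each step sample $y$ uniformly from $x + \delta B_2^n$ for a step size $\delta$ of order $r/\poly(n)$, query the membership oracle on $y$, and move to $y$ if $y \in K$ (otherwise stay at $x$). By symmetry of the proposal distribution the chain is reversible with respect to the uniform measure on $K$, and each step costs a constant number of oracle calls and arithmetic operations. A weak oracle is accommodated by choosing $\delta$ small enough that the chain spends negligible time in the ``oracle-unreliable'' shell near $\partial K$, absorbing the resulting deviation via Lemma~\ref{lem:tvd}.

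Next I would bound the mixing time by a polynomial in $n$, $\log(1/\eta)$, and $\log(R/r)$. The standard route is conductance: the $\eta$-mixing time is at most $O(\Phi^{-2} \log(1/(\eta \pi_{\min})))$, where $\Phi$ is the conductance of the chain. To lower-bound $\Phi$ I would combine (i) a local step estimate showing that the one-step transition probability from $x$ into a set $A$ is close to $\vol(A \cap (x + \delta B_2^n))/\vol(\delta B_2^n)$ whenever $x$ is well-separated from $\partial K$, with (ii) an isoperimetric inequality for $K$ of Lov\'asz--Simonovits type: for any partition $K = S_1 \cup S_2 \cup S_3$ with $d(S_1, S_2) \geq t$, one has $\vol(S_3) \geq \Omega(t/\diam(K)) \cdot \min(\vol(S_1), \vol(S_2))$. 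Together these yield $\Phi \geq \Omega(\delta/\diam(K))$, hence a mixing time polynomial in $n$, $R/r$, and $\log(1/\eta)$.

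The main obstacle is twofold. First, the isoperimetric inequality itself is the technical heart; I would prove it by the localization / needle-decomposition method, reducing the multidimensional statement to a one-dimensional inequality on log-concave measures restricted to segments. Second, the bound just described is polynomial in $R/r$, not logarithmic. To sharpen the dependence to $\log(R/r)$ I would apply an iterative rounding preprocessing: use a crude sampler whose complexity is polynomial in the current $R/r$ to estimate the covariance of the uniform measure on $K$, apply the inverse square-root of the empirical covariance to bring $K$ into near-isotropic position, and iterate. Each round improves the roundness by a constant factor in a suitable sense, so after a logarithmic number of rounds $K$ lies in a position with $R/r = \poly(n)$, and a single final run of the ball walk completes the sampler within the target total variation distance. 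The delicate bookkeeping is to propagate the TV error from each approximate round through Lemma~\ref{lem:tvd} without blowing up the overall budget.
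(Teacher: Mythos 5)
The paper does not prove this theorem at all: it is imported verbatim from the cited work of Dyer, Frieze and Kannan, so there is no internal proof to compare against. Your sketch follows the now-standard route (ball walk, conductance, Lov\'asz--Simonovits localization, rounding) rather than the original lattice-cube walk of that paper, and at the level of architecture it is the right machinery. However, two steps as written do not deliver the stated complexity.

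First, the preprocessing you propose to turn the $\poly(R/r)$ dependence into $\ln(R/r)$ is circular: if each rounding round uses ``a crude sampler whose complexity is polynomial in the current $R/r$,'' then the very first round already costs $\poly(R/r)$ oracle calls, which can be exponential in the input size, and no amount of subsequent improvement recovers the $\poly(n,\ln(R/r))$ bound. The sampling-based covariance rounding is only affordable once the body is already $\poly(n)$-rounded. The correct first step is the deterministic shallow-cut ellipsoid rounding of Gr\"otschel--Lov\'asz--Schrijver, which from the $(a_0,r,R)$ guarantees produces an affine image of $K$ sandwiched between balls of ratio $\poly(n)$ using only $\poly(n,\ln(R/r))$ membership calls; the ball walk (and, if desired, your iterative isotropic rounding) is then run on that image. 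Second, your mixing bound $O(\Phi^{-2}\log(1/(\eta\pi_{\min})))$ is vacuous here: the state space is continuous, and from a single starting point there is no meaningful $\pi_{\min}$. You need the warm-start form of the bound, $O(\Phi^{-2}\log(M/\eta))$ with $M=\sup dQ_0/d\pi$, initialized for instance from the uniform distribution on $a_0+rB_2^n\subseteq K$, which gives $\log M\leq n\log(R/r)$ and stays within budget; relatedly, the ball walk's local conductance can vanish near the boundary, so the cold-start analysis needs the speedy-walk comparison or an average-local-conductance argument rather than the naive pointwise step estimate. With those two repairs the outline matches the established proofs in the literature.
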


We call a random vector $X \in K$ \emph{$\eta$-uniform} if the total variation distance between $X$ and
a uniform vector on $K$ is at most $\eta$.

Our main IP algorithm will provide a guarantee with respect to the barycenter of $K$. The following lemma allows us to
approximate a point near $b(K)$ with overwhelming probability:

\begin{lemma}[Approx. Barycenter]
\label{lem:barycenter-approx}
For $\eps > 0$, let $b = \frac{1}{N} \sum_{i=1}^N X_i$, $N = \frac{c n^2}{\eps^2}$, $c > 0$ an absolute constant, and
where $X_1,\dots,X_N$ are iid $4^{-n}$-uniform samples on $K \subseteq \R^n$. Then
\[
\Pr[\|\pm(b-b(K))\|_{K-b(K)} > \eps] \leq 2^{-n} 
\]
\end{lemma}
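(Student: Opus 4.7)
The plan is to reduce the semi-norm concentration to a scalar Bernstein estimate via polar duality and an $\eps$-net in $K^\circ$. The crucial input that makes $N = \Theta(n^2/\eps^2)$ samples suffice is an $O(n)$ (rather than $O(n^2)$) variance bound for linear marginals of the uniform distribution on $K$, obtained by combining Gr\"unbaum's inequality ($-K \subseteq nK$ for $K$ with centroid at $0$) with the mean-zero property.

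First, Lemma~\ref{lem:tvd} lets me replace the $N$ samples by exactly uniform ones at a total-variation cost of $N \cdot 4^{-n} \ll 2^{-n-1}$, which I absorb into the final failure probability. Translating so that $b(K)=0$ leaves $\|\cdot\|_{K-b(K)} = \|\cdot\|_K$ unchanged, and the goal becomes $\Pr[\|\bar X\|_K > \eps] \leq 2^{-n-2}$ and $\Pr[\|-\bar X\|_K > \eps] \leq 2^{-n-2}$, where $\bar X = \frac{1}{N}\sum X_i$ is the empirical mean of iid samples uniform on $K$ with $\E X_i = 0$.

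For the first event, I dualize via $\|y\|_K = \sup_{u \in K^\circ}\pr{u}{y}$ and pass to a $\tfrac{1}{2}$-net $\mathcal{N} \subseteq K^\circ$ (in the gauge of $K^\circ$) of size at most $5^n$ from the standard volume-packing bound, yielding $\|\bar X\|_K \leq 2 \max_{v \in \mathcal{N}} \pr{v}{\bar X}$. For fixed $v \in K^\circ$, the scalars $Y_i := \pr{v}{X_i}$ are iid mean-zero with $Y_i \leq 1$ (since $v \in K^\circ$) and $Y_i \geq -n$ (by Gr\"unbaum). Mean-zero together with $Y_i \leq 1$ forces $\E[Y_i^+] = \E[Y_i^-] \leq 1$, so $\E|Y_i| \leq 2$ and
\[
\Var(Y_i) = \E[Y_i^2] \leq n \cdot \E|Y_i| \leq 2n.
\]
Bernstein's inequality with $|Y_i| \leq n$ and $\Var(Y_i) \leq 2n$ then yields $\Pr[\pr{v}{\bar X} > \eps/2] \leq \exp(-c'n)$, where $c'$ grows linearly with the constant $c$ in $N = cn^2/\eps^2$. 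A union bound over $|\mathcal{N}| \leq 5^n$ closes this side once $c$ is large enough. The event $\{\|-\bar X\|_K > \eps\}$ is handled identically on the also-centered body $-K$: the roles of the $+1$ and $-n$ bounds on $Y_i$ swap, but the variance estimate is unchanged.

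The main technical obstacle is avoiding the naive variance bound $\Var(Y_i) \leq n^2$ that would follow from $|Y_i| \leq n$ alone; that bound only yields per-functional concentration of order $\exp(-\Theta(c))$ and would force $c$ to depend on $n$, breaking the lemma. The improved $O(n)$ bound above critically exploits the asymmetric support $Y_i \in [-n,1]$ together with $\E Y_i = 0$ to kill one factor of $n$ in the second moment, and it is here that the geometry of centered convex bodies (via Gr\"unbaum) is essential. Everything else -- the net construction, Bernstein, and the $-K$ reflection for the second event -- is routine.
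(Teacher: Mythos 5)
Your route is genuinely different from the paper's. The paper places $K$ in isotropic position, observes $B_2^n \subseteq K$, notes that $b$ is log-concave with covariance $\frac{1}{N}I_n$, and invokes Paouris's concentration inequality to get the $e^{-\Omega(n)}$ tail in one shot. You instead dualize, discretize $K^\circ$, and run Bernstein plus a union bound; the key substitute for Paouris is your $O(n)$ variance bound for linear marginals, obtained from the Minkowski--Radon fact $-K \subseteq nK$ for centered $K$ (you call this Gr\"unbaum's inequality; Gr\"unbaum's is the halfspace-volume statement, but the containment you use is true and classical) combined with mean-zero and the one-sided bound $\pr{v}{X_i} \leq 1$. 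That variance argument is correct and is the genuinely clever step; it is also more elementary than the paper's reliance on log-concavity of convolutions and Paouris.

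There is, however, one genuine gap: the claim that $K^\circ$ admits a $\tfrac12$-net of size $5^n$ in its own gauge ``from the standard volume-packing bound.'' That bound is standard only for \emph{symmetric} bodies. For an asymmetric $C$, a maximal $\delta$-separated set does not yield disjoint translates of $\frac{\delta}{2}C$ (overlap only forces $v_j - v_i \in \frac{\delta}{2}(C-C)$, not $\delta C$); running the packing argument with the symmetrization $C \cap -C$ instead gives $|\mathcal{N}| \leq (1+\tfrac{2}{\delta})^n \cdot \vol(C)/\vol(C \cap -C)$. For $C = K^\circ$ with $K$ centered at its barycenter, the only containment you get for free is $-\tfrac{1}{n}K^\circ \subseteq K^\circ$, so the extra factor is a priori $n^{O(n)}$; that would force $c = \Omega(\log n)$ in the union bound, contradicting ``absolute constant.'' One can in fact show $\vol(K^\circ)/\vol(K^\circ \cap -K^\circ) \leq C_0^n$ for centered $K$, but this rests on Milman--Pajor together with the reverse Santal\'o (Bourgain--Milman) inequality, which is far from a routine packing estimate and must be cited.

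The cleaner repair is to move the net into the symmetric body $E := (K \cap -K)^\circ = \conv(K^\circ \cup -K^\circ)$, which genuinely has a $\tfrac12$-net of size $5^n$ and for which $\max\set{\|y\|_K, \|-y\|_K} = \sup_{u \in E}\pr{u}{y} \leq 2\max_{v \in \mathcal{N}}\pr{v}{y}$, so a single net handles both signs. Every $v \in E$ is a convex combination $\lambda v_1 + (1-\lambda)(-v_2)$ with $v_1, v_2 \in K^\circ$, so $|\pr{v}{X_i}| \leq n$ and, by the triangle inequality in $L^2$, $\Var(\pr{v}{X_i})^{1/2} \leq \lambda\sqrt{2n} + (1-\lambda)\sqrt{2n} = \sqrt{2n}$; your Bernstein step then goes through verbatim. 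With that substitution the proof is complete and constitutes a valid, self-contained alternative to the paper's argument.
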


\paragraph{Lattices:}
An $n$-dimensional lattice $L \subseteq \R^n$ is formed by integral combinations of linearly independent vectors
$b_1,\dots,b_n \in \R^n$. Letting $B = (b_1,\dots,b_n)$, for a point $x \in \R^n$ we define the modulus operator as
\[
x \bmod B = B(B^{-1}x-\floor{B^{-1} x})
\]
where for $y \in \R^n$, $\floor{y} = (\floor{y_1},\dots,\floor{y_n})$. We note that $x \bmod B \in B[0,1)^n$, i.e. the
fundamental parallelipiped of $B$ and that $x - (x \bmod B) \in L$, hence $x \bmod B$ is the unique representative of
the coset $x + L$ in $B[0,1)^n$.


\paragraph{Convex Geometry:}

The following lemma provides us some simple estimates on the effects of recentering the semi-norm associated
with a convex body.
\begin{lemma}
\label{lem:cent-comp}
Take $x, y \in K$ satisfying $\|\pm(x-y)\|_{K-y} \leq \alpha < 1$. Then for $z \in \R^n$ we have that
\begin{enumerate}
\item $z \in \tau K + (1-\tau)y \Leftrightarrow \|z-y\|_{K-y} \leq \tau$
\item $\|z-y\|_{K-y} \leq \|y-x\|_{K-x} + \alpha ~ \left|1-\|z-x\|_{K-x}\right|$
\item $\|z-x\|_{K-x} \leq \|z-y\|_{K-y} + \frac{\alpha}{1-\alpha} ~ \left|1-\|z-y\|_{K-y}\right|$
\end{enumerate}
\end{lemma}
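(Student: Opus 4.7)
For Part 1 the plan is simply to unfold the definitions. Rewriting $\tau K + (1-\tau)y = y + \tau(K-y)$, the condition $z \in \tau K + (1-\tau)y$ is equivalent to $z-y \in \tau(K-y)$, which is exactly $\|z-y\|_{K-y} \leq \tau$ (using compactness of $K$ so the infimum in the gauge is attained). I would note up front that the hypothesis $\|\pm(x-y)\|_{K-y} \leq \alpha < 1$ is to be read in the setting $y \in \intr(K)$, so that $\|\cdot\|_{K-y}$ is a bona fide semi-norm.

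For Part 2 I would set $\beta = \|z-x\|_{K-x}$ and pick $k \in K$ with $z - x = \beta(k-x)$. Adding and subtracting $y$ inside each factor yields the key decomposition
\[
z - y \;=\; (1-\beta)(x-y) + \beta(k-y),
\]
to which I apply the triangle inequality for $\|\cdot\|_{K-y}$. The term $\beta\|k-y\|_{K-y}$ is at most $\beta$ since $k \in K$. The other term, $\|(1-\beta)(x-y)\|_{K-y}$, is where some care is required: because the semi-norm is only positively homogeneous, I split on the sign of $1-\beta$, producing either $(1-\beta)\|x-y\|_{K-y}$ or $(\beta-1)\|y-x\|_{K-y}$. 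This is precisely where both halves of the symmetric hypothesis $\|\pm(x-y)\|_{K-y} \leq \alpha$ are used, and each case yields $|1-\beta|\alpha$. Summing the two estimates gives the inequality.

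For Part 3 the natural move is to apply Part 2 with $x$ and $y$ interchanged, but the obstacle is that the hypothesis only controls $\|\cdot\|_{K-y}$, not $\|\cdot\|_{K-x}$. The main technical step, and the piece I expect to be the hardest, is therefore a bootstrap: I need to establish $\|\pm(x-y)\|_{K-x} \leq \alpha/(1-\alpha)$. In the direction $y-x$, rewriting $\|y-x\|_{K-y} \leq \alpha$ gives $y = \frac{1}{1+\alpha}x + \frac{\alpha}{1+\alpha}k'$ with $k' \in K$, so $y - x = \frac{\alpha}{1+\alpha}(k'-x)$, yielding the (even sharper) bound $\alpha/(1+\alpha)$. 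In the direction $x-y$, starting from $x - y = \alpha(k''-y)$ with $k'' \in K$ and substituting $k''-y = (k''-x) + (x-y)$ produces the self-referential inequality
\[
\|x-y\|_{K-x} \;\leq\; \alpha\bigl(1 + \|x-y\|_{K-x}\bigr),
\]
which rearranges, using $\alpha < 1$, to $\|x-y\|_{K-x} \leq \alpha/(1-\alpha)$. Equipped with this bootstrap, applying Part 2 to the pair $(y,x)$ with constant $\alpha/(1-\alpha)$ in place of $\alpha$ delivers Part 3.
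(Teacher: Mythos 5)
Your proof is correct and follows essentially the same route as the paper's: part (1) by unfolding definitions, part (2) via the decomposition $z-y = (1-\beta)(x-y)+\beta(k-y)$ (the paper phrases this as a Minkowski-sum inclusion, you as a triangle inequality, with the two signs of the hypothesis handling the sign of $1-\beta$), and part (3) by first bootstrapping $\|\pm(x-y)\|_{K-x}\le \alpha/(1-\alpha)$ and then reapplying the part-(2) argument with the roles of $x$ and $y$ swapped. Note that what you prove in part (2) — with first term $\|z-x\|_{K-x}$ rather than the printed $\|y-x\|_{K-x}$ — is exactly what the paper's own proof establishes and what the applications use, so that discrepancy is a typo in the lemma statement rather than a gap in your argument.
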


The following theorem of Milman and Pajor, tells us that $K-b(K)$ is $\frac{1}{2}$-symmetric.
\begin{theorem}[\cite{MP00}]
Assume $b(K) = 0$. Then $\vol(K \cap -K) \geq \frac{1}{2^n} \vol(K)$.
\label{thm:symmetrize}
\end{theorem}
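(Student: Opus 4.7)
The plan is to study the function $v:K \to \R_{\ge 0}$ defined by $v(t) := \vol\bigl((K-t) \cap (t-K)\bigr)$; note that the quantity of interest is $v(0) = \vol(K \cap -K)$. Two elementary properties drive the argument. First, $v^{1/n}$ is concave on $K$: convexity of $K$ gives the Minkowski-sum inclusion $\lambda(K \cap (z_1-K)) + (1-\lambda)(K \cap (z_2-K)) \subseteq K \cap \bigl((\lambda z_1 + (1-\lambda)z_2) - K\bigr)$, and Brunn--Minkowski then yields concavity of $v^{1/n}$. Second, a Fubini computation with the linear change of variables $(y,t) \leftrightarrow (u,w) = (y+t,\; t-y)$, whose Jacobian is $2^n$, yields
$$\int_K v(t)\,dt \;=\; \frac{\vol(K)^2}{2^n}, \qquad \int_K t\, v(t)\,dt \;=\; b(K) \cdot \frac{\vol(K)^2}{2^n} \;=\; 0.$$
In particular, the support of $v$ is exactly $K$ (since $(K-t) \cap (t-K) \ne \emptyset$ forces $2t \in K + K = 2K$), the measure $v(t)\,dt$ has barycenter at the origin, and $\max_{t\in K} v(t) \ge \vol(K)/2^n$ by the pigeonhole principle.

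The crux of the proof is to upgrade this to $v(0) \ge \vol(K)/2^n$, i.e., to show that $v$ attains its maximum at the origin. Granting this, the conclusion $\vol(K \cap -K) = v(0) \ge 2^{-n}\vol(K)$ is immediate. To pin down the mode at $0$, I would exploit the slicing interpretation: $v(t)$ equals the $n$-dimensional volume of the slice of the convex body $K \times K \subset \R^{2n}$ by the affine subspace $\{(u,w) : u+w = 2t\}$. Since both the slicing direction and the projection $(u,w) \mapsto (u+w)/2$ are symmetric in the two factors, a Brunn--Minkowski comparison of the slices $v(t)$ and $v(-t)$, combined with the centroid hypothesis $b(K) = 0$, should place the maximum of $v$ at the origin.

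The main obstacle is precisely this last step: for a generic log-concave or even $(1/n)$-concave function, the mode of the density and the barycenter of the induced measure can disagree (for instance, a shifted exponential has centroid zero but mode strictly inside the support). The proof must therefore use the additional rigidity arising from the product structure of $K \times K$ and the compatibility of $b(K) = 0$ with the slicing direction; obtaining the sharp constant $2^{-n}$ (rather than a weaker $c^n$ from a generic inequality such as Fradelizi's $f(0) \ge e^{-n}\|f\|_\infty$) is what makes the centroid hypothesis essential.
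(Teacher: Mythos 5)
The paper offers no proof of this statement---it is imported wholesale from \cite{MP00}---so your attempt can only be judged on its own terms. Your setup is correct and is indeed the standard skeleton of the Milman--Pajor argument: with $v(t)=\vol((K-t)\cap(t-K))=(1_K*1_K)(2t)$, the Minkowski-sum inclusion plus Brunn--Minkowski does make $v^{1/n}$ concave on its support, the support is exactly $K$, and the Fubini identities $\int_K v\,dt=2^{-n}\vol(K)^2$ and $\int_K t\,v(t)\,dt=0$ are right.

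The gap is the step you yourself flag, and the route you sketch for it would fail. Showing ``$v$ attains its maximum at the origin'' amounts to claiming that the mode of the self-convolution $1_K*1_K$ sits at twice the centroid, which is false for general asymmetric $K$; your own remark about mode versus barycenter for $(1/n)$-concave densities already tells you that concavity plus the barycenter identity cannot force the mode to the origin, and no amount of ``rigidity of the product structure'' rescues this. Fortunately the maximum is irrelevant: what you need is only $v(0)\ge\frac{1}{\vol(K)}\int_K v$, and this follows in two lines from the facts you already established. Apply Jensen's inequality to the concave function $u=v^{1/n}$ and the probability measure $\mu=v\,dt/\int_K v$, whose barycenter is $0\in K$ by your second Fubini identity: $u(0)\ge\int u\,d\mu=\int_K v^{(n+1)/n}/\int_K v$. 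Then the power-mean (H\"older) inequality with exponent $p=\frac{n+1}{n}>1$ gives $\int_K v^{p}\ge\vol(K)^{1-p}\bigl(\int_K v\bigr)^{p}$, whence $u(0)\ge\bigl(\int_K v/\vol(K)\bigr)^{1/n}$ and $v(0)\ge\int_K v/\vol(K)=2^{-n}\vol(K)$. Until you replace the ``mode at the origin'' step with this (or an equivalent Fradelizi-type lemma for $\tfrac1n$-concave functions evaluated at the barycenter of their own measure), the argument does not close.
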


Using the above theorem, we give a simple extension which shows that near-symmetry is a stable property.
\begin{corollary}
Assume $b(K)=0$. Then for $x \in K$ we have that $K-x$ is $\frac{1}{2}(1-\|x\|_{K})$-symmetric.
\label{cor:sym-stable}
\end{corollary}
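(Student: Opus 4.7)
The plan is to reduce the corollary to the Milman--Pajor theorem (Theorem~\ref{thm:symmetrize}) by exhibiting a suitably scaled copy of $K \cap -K$ inside $(K-x) \cap -(K-x)$. Set $t = \|x\|_K \in [0,1]$. Since $b(K)=0$ and $K$ is a convex body, $0 \in \intr(K)$, so $\|\cdot\|_K$ is a bona fide semi-norm with $\|y\|_K = 0 \iff y=0$. The case $t=0$ gives $x=0$ and the statement is exactly Theorem~\ref{thm:symmetrize}; the case $t=1$ makes the claimed bound vacuous. So I would focus on $0 < t < 1$ and write $u = x/t$, which satisfies $\|u\|_K = 1$ and hence $u \in K$.

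The key step is to establish the containment
\[
(1-t)(K \cap -K) ~\subseteq~ (K-x) \cap -(K-x).
\]
For any $w \in K \cap -K$, both $w$ and $-w$ lie in $K$, so by convexity
\[
(1-t)w + x = (1-t)w + t u \in K \quad\text{and}\quad x - (1-t)w = (1-t)(-w) + t u \in K.
\]
The first inclusion says $(1-t)w \in K - x$, the second that $(1-t)w \in x - K = -(K-x)$, which together give the desired containment. This is the only step that involves an actual idea; everything else is translation of definitions.

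Finally I would take volumes. Using the containment and Theorem~\ref{thm:symmetrize},
\[
\vol\bigl((K-x) \cap -(K-x)\bigr) \geq (1-t)^n \vol(K \cap -K) \geq \left(\frac{1-t}{2}\right)^n \vol(K).
\]
Since $\vol(K-x) = \vol(K)$, this says exactly that $K-x$ is $\frac{1}{2}(1-\|x\|_K)$-symmetric, proving the corollary. There is no substantive obstacle; the only point requiring care is recognizing that $x$ can be rewritten as the convex combination $t \cdot (x/t)$ with $x/t \in K$, which is what enables the two-sided convexity argument in the key step.
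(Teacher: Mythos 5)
Your proof is correct and follows essentially the same route as the paper: both establish the containment $(1-\|x\|_K)(K\cap -K)\subseteq (K-x)\cap(x-K)$ and then conclude by taking volumes and invoking Theorem~\ref{thm:symmetrize}. The only cosmetic difference is that you verify the containment by writing $x$ as the convex combination $t\cdot(x/t)$ and using convexity of $K$ directly, whereas the paper runs the equivalent computation through the triangle inequality for the gauge functions $\|\cdot\|_K$ and $\|\cdot\|_{-K}$.
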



\section{Algorithms}

\subsection{Integer Programming}
\label{sec:ip}

We describe the basic reduction from Approximate Integer Programming to Approximate CVP, as well as the reduction from
Approximate Integer Optimization to Approximate Integer Programming.

\begin{proof}[Proof of Theorem \ref{thm:approx-ip-intro} (Approximate Integer Programming)]
We are given $0 < \eps \leq \frac{1}{2}$, and we wish to find a lattice point in $(1+\eps)K - \eps b(K) \cap L$ or
decide that $K \cap L = \emptyset$. The algorithm, which we denote by ApproxIP$(K,L,\eps)$, will be the following:

\paragraph{Algorithm:}
\begin{enumerate}
\item Compute $b \in K$, satisfying $\|\pm(b-b(K))\|_{K-b(K)} \leq \frac{1}{3}$, using Lemma~\ref{lem:barycenter-approx}
(see details below).
\item Compute $y \in L$ such that $y$ is $1+\frac{2\eps}{5}$ approximate closest lattice vector to $b$
under the semi-norm $\|\cdot\|_{K-b}$ using Approx-CVP (Theorem~\ref{thm:approx-cvp}).
\item Return $y$ if $y \in \|y-b\|_{K-b} \leq 1+\frac{3\eps}{4}$, and otherwise return ``EMPTY'' (i.e. $K \cap L =
\emptyset$).
\end{enumerate}

\paragraph{Correctness:} Assuming that steps (1) and (2) return correct outputs (which occurs with overwhelming
probability), we show that the final output is correct.

First note that if $\|y-b\|_{K-b} \leq 1+\frac{3\eps}{4}$, then by Lemma~\ref{lem:cent-comp} we have that
\[
\|y-b(K)\|_{K-b(K)} \leq \|y-b\|_{K-b} + \frac{1}{3}\left|1-\|y-b\|_{K-b}\right| \leq 1+\frac{3\eps}{4} +
\frac{1}{3}~\frac{3\eps}{4} = 1+\eps
\]
as required. Now assume that $K \cap L \neq \emptyset$. Then we can take $z \in L$ such that $\|z-b\|_{K-b} \leq 1$.
Since $y$ is a $1+\frac{2\eps}{5}$ closest vector, we must have that $\|y-b\|_{K-b} \leq 1+\frac{2\eps}{5}$. Hence
by the reasoning in the previous paragraph, we have that $\|y-b(K)\|_{K-b(K)} \leq 1+\eps$ as needed.

For the furthermore, we assume that $\frac{1}{1+\eps}K + \frac{\eps}{1+\eps}b(K) \cap L \neq \emptyset$. So we may
pick $z \in L$ such that $\|z-b(K)\|_{K-b(K)} \leq \frac{1}{1+\eps}$. By Lemma~\ref{lem:cent-comp}, we have that
\[
\|z-b\|_{K-b} \leq \|z-b(K)\|_{K-b(K)}+\frac{\frac{1}{3}}{1-\frac{1}{3}}~\left|1-\|z-b(K)\|_{K-b(K)}\right|
              \leq \frac{1}{1+\eps} + \frac{1}{2}~\frac{\eps}{1+\eps} = \frac{1+\frac{\eps}{2}}{1+\eps}
\]
Next by the assumptions on $y$, we have that
$\|y-b\|_{K-b} \leq \frac{1+\frac{\eps}{2}}{1+\eps}(1+\frac{2\eps}{5}) \leq 1$
since $0 < \eps \leq \frac{1}{2}$. Hence $y \in K \cap L$ as needed.

\paragraph{Runtime:}
For step (1), by Lemma~\ref{lem:barycenter-approx} we can compute $b \in K$, satisfying $\|\pm(b-b(K))\|_{K-b(K)} \leq
\frac{1}{3}$, with probability at least $1-2^{-n}$, by letting $b$ be the average of $O(n^2)$ $~4^{-n}$-uniform samples
over $K$. By Theorem \ref{thm:unif-sampler}, each of these samples can be computed in $\poly(n,\ln(\frac{R}{r}))$ time.

For step (2), we first note that by Corollary~\ref{cor:sym-stable}, $K-b$ is $(1-\frac{1}{3})\frac{1}{2} = \frac{1}{3}$-
symmetric. Therefore, the call to the Approximate CVP algorithm, with error parameter $\frac{2\eps}{5}$ returns a valid
approximation vector with probability at least $1-2^{-n}$ in time $O(3 (\frac{5}{2\eps})^2)^n = O(1/\eps^2)^n$.  Hence
the entire algorithm takes time $O(1/\eps^2)^n$ and outputs a correct answer with probability at least $1-2^{-n+1}$ as
needed.
\end{proof}

\begin{proof}[Proof of Theorem \ref{thm:approx-ip-opt-intro} (Approximate Integer Optimization)]
We are given $v \in \R^n$, $0 < \eps \leq \frac{1}{2}$, and $\delta > 0$ where we wish to find a lattice point in $K +
\eps(K-K) ~\cap~ L$ whose objective value is within an additive $\delta$ of the best point in $K ~\cap~ L$. We remember
that $K$ is $(a_0,r,R)$-centered. Since we lose nothing by making $\delta$ smaller, we shall assume that $\delta \leq
\|v\|_2r$.  We will show that Algorithm \ref{alg:approx-opt} correctly solves the optimization problem.

\begin{algorithm}
\caption{Algorithm ApproxOPT$(K, L, v, \eps, \delta)$}
\label{alg:approx-opt}
\begin{algorithmic}[1]

\REQUIRE $(a_0,r,R)$-centered convex body $K \subseteq \R^n$ presented by membership oracle, lattice $L \subseteq \R^n$
given by a basis, objective $v \in \R^n$, tolerance parameters $0 < \eps \leq \frac{1}{2}$ and $\delta > 0$
\ENSURE ``EMPTY'' if $K \cap L = \emptyset$ or $z \in K + \eps(K-K) \cap L$ satisfying $\sup_{y \in K \cap L}
\pr{v}{y} \leq \pr{v}{z} + \delta$

\STATE $z \leftarrow $ApproxIP$(K,L,\eps)$
\IF{z = ``EMPTY''}
   \RETURN  ``EMPTY''
\ENDIF

\STATE Compute $x_l,x_u \in K$ using the ellipsoid algorithm satisfying
$\inf_{x \in K} \pr{v}{x} \geq \pr{v}{x_l} - \frac{\delta}{12}$ and 
$\sup_{x \in K} \pr{v}{x} \leq \pr{v}{x_u} + \frac{\delta}{12}$
\STATE Set $l \leftarrow \pr{v}{z}$ and $u \leftarrow \pr{v}{x_u} + \frac{\delta}{12}$
\WHILE{$u-l > \delta$}
    \STATE $m \leftarrow \frac{1}{2}(u+l)$
	\STATE $y \leftarrow $ApproxIP$(K \cap \set{x \in \R^n: m \leq \pr{v}{x} \leq u},L,\eps)$
	\IF{y = ``EMPTY''}
		\STATE $u \leftarrow m$
		\STATE $y \leftarrow $ApproxIP$(K \cap \set{x \in \R^n: l \leq \pr{v}{x} \leq m},L,\eps)$
		\IF{y = ``EMPTY''}
			\STATE Set $u \leftarrow l$ and $y \leftarrow z$
		\ENDIF
	\ENDIF
	\IF{$\pr{v}{z} < \pr{v}{y}$}
		\STATE Set $z \leftarrow y$ and $l \leftarrow \pr{v}{z}$
	\ENDIF
\ENDWHILE
\RETURN z
\end{algorithmic}
\end{algorithm}

\paragraph{Correctness:}
Assuming that all the calls to the ApproxIP solver output a correct result (which occurs with overwhelming probability),
we show that Algorithm \ref{alg:approx-opt} is correct.  As can be seen, the algorithm performs a standard binary search
over the objective value. During iteration of the while loop, the value $u$ represents the current best upper bound on
$\sup_{y \in K \cap L} \pr{v}{y}$, where this bound is achieved first by bounding $\sup_{x \in K} \pr{v}{x}$ (line 5),
or by showing the lattice infeasibility of appriopriate restrictions of $K$ (line 10 and 13). Similarly, the value $l$
represents the objective value of the best lattice point found thus far, which is denoted by $z$. Now as long as the
value of $z$ is not null, we claim that $z \in K + \eps(K-K)$. To see this note that $z$ is the output of some call to
Approx IP, on $K_{a,b} = K \cap \set{x \in \R^n: a \leq \pr{v}{x} \leq b}$ for some $a < b$, the lattice $L$, with
tolerance parameter $\eps$. Hence if $z$ is none null, we are guaranteed that
\begin{align}
\label{eq:approx-cont}
\begin{split}
z \in (1+\eps)K_{a,b} - \eps b(K_{a,b}) &= K_{a,b}  + \eps(K_{a,b} - b(K_{a,b})) \\
                        &\subseteq K_{a,b} + \eps(K_{a,b} - K_{a,b}) \subseteq K + \eps(K-K)
\end{split}
\end{align}
since $b(K_{a,b}) \subseteq K_{a,b} \subseteq K$. Therefore $z \in K + \eps(K-K)$ as required. Now,
the algorithm returns if ``EMPTY'' if $K \cap L = \emptyset$ (line 3), or $z$ if $u-l < \delta$ (line 17).
Hence the algorithms output is valid as required. 

\paragraph{Runtime:} Assuming that each call to ApproxIP returns a correct result, we first bound the number of
of iterations of the while loop. After this, using a union bound over the failure probability of ApproxIP,
we get a bound on the probability that the algorithm does not perform as described by the analysis.

First, we show that gap $u-l$ decreases by a factor of at least $\frac{3}{4}$ after each iteration of
the loop. Note that by construction of $m$, if $K_{m,u}$ is declared ``EMPTY'' in line 8, then
clearly $u-l$ decreases by $\frac{1}{2}$ in the next step (since $u$ becomes $m$). Next, if a lattice
point $y$ is returned in line 8, we know by Equation \eqref{eq:approx-cont} that $y \in K_{m,u} - \eps(K_{m,u}-K_{m,u})$.
Therefore
\begin{equation}
\label{eq:opt-gap}
\pr{v}{y} \geq \inf_{x \in K_{m,u}} \pr{v}{y} - \eps\left(\sup_{x \in K_{m,u}} \pr{v}{y} - \inf_{x \in K_{m,u}}
\pr{v}{y} \right) \geq m - \eps(u-m) 
\end{equation}
Since $m = \frac{1}{2}(l+u)$, and $\eps \leq \frac{1}{2}$, we see that 
\[
u-\pr{v}{y} \leq (u-m) + \eps(u-m) \leq \frac{1}{2}(u-l) + \frac{1}{4}(u-l) = \frac{3}{4}(u-l)
\]
as needed. From here, we claim that we perform at most $\ceil{\ln(\frac{4R\|v\|_2)}{\ln(\delta)}/\ln(\frac{4}{3})}$
iterations of the for loop.  Now since $K \subseteq a_0 + RB_2^n$, note that the variation of $v$ over $K$ (max minus
min) is at most $2R\|v\|_2$.  Therefore, using Equation~\eqref{eq:opt-gap}, the initial value of $u-l$ (line 6) is at most
\[
2R\|v\|_2 + \eps(2R\|v\|_2) + \frac{\delta}{12} \leq 2R\|v\|_2 + \frac{1}{2}(2R\|v\|_2) + \frac{\|v\|r}{12} \leq 4R\|v\|_2
\]
Since $u-l$ decreases by a factor at least $\frac{3}{4}$ at each iteration, it takes at most
$\ceil{\ln(\frac{4R\|v\|_2}{\delta})/\ln\frac{4}{3}}$ iterations before $u-l \leq \delta$. Since we call ApproxIP at
most twice at each iteration, the probability that any one of these calls fails (wherupon the above analysis does not
hold) is at most $2\ceil{\ln(\frac{4R\|v\|_2}{\delta})/\ln\frac{4}{3}}F$, where $F$ is the failure probability of a call
to ApproxIP. For the purposes of this algorithm, we claim that the error probability of a call to ApproxIP can be made
arbitrarily small by repetition. To see this, note any lattice vector returned by ApproxIP$(K_{a,b},L,\eps)$ is always a
success for our purposes, since by the algorithm's design any returned vector is always in
$K_{a,b}+\eps(K_{a,b}-K_{a,b}) \cap L$ (which is sufficient for us). Hence the only failure possibility is that ApproxIP
returns that $K_{a,b} \cap L = \emptyset$ when this is not the case. By the guarantees on ApproxIP, the probability that
this occurs over $k$ independent repetitions is at most $2^{-nk}$. Hence by repeating each call to ApproxIP
$O(1+\frac{1}{n} \ln\ln\frac{R\|v\|}{\delta})$ times, the total error probability over all calls can be reduced to
$2^{-n}$ as required. Hence with probability at least $1-2^{-n}$, the algorithm correctly terminates in at most
$\ceil{\ln(\frac{4R\|v\|_2}{\delta})/\ln\frac{4}{3}}$ iterations of the while loop.

Lastly, we must check that each call to ApproxIP is done over a well centered $K_{a,b}$, i.e. we must be able to provide
to ApproxIP a center $a_0' \in \R^n$ and radii $r'$, $R'$ such that $a_0' + r'B_2^n \subseteq K_{a,b} \subseteq a_0' +
R'B_2^n$, where each of $a_0',r',R'$ have size polynomial in the input parameters. Here we can show that appropriate
convex combinations of the points $x_l,x_u$ (line 4) and $a_0 \in K$ allow us to get well-centered points inside each
$K_{m,u}$ (line 8) and $K_{l,m}$ (line 11). For simplicity in the presentation, we delay this discussion until the full
version of the paper.

Given the above, since we call ApproxIP at most twice in each iteration (over a well-centered convex body), with
probability at least $1-2^{-n}$ the total running time is $O(\frac{1}{\eps^2})^n \polylog(R,r,\delta,\|v\|_2)$
as required.



\end{proof}

\subsection{Subspace Avoiding Problem}
\label{sec:sap}
In the following two sections, $C \subseteq \R^n$ will denote be a $(0,r,R)$-centered $\gamma$-symmetric convex body,
and $L \subseteq \R^n$ will denote an $n$-dimensional lattice.

In this section, we introduce the Subspace Avoiding Problem of \cite{DBLP:conf/icalp/BlomerN07}, and outline how the AKS
sieve can be adapted to solve it under general semi-norms. We defer most of the analysis to the full version on the paper.

Let $M \subseteq \R^n$ be a linear subspace where $\dim(M) = k \leq n-1$. Let $\lambda(C,L,M) = \inf_{x \in L
\setminus M} \|x\|_C$. Note that under this definition, we have the identity $\lambda_1(C,L) = \lambda(C,L,\set{0})$.

\begin{definition}  The $(1+\eps)$-Approximate Subspace Avoiding Problem with respect $C$, $L$ and $M$ is to find a
lattice vector $y \in L \setminus M$ such that $\|y\|_C = (1+\eps) \lambda(C,L,M)$.
\end{definition}

For $x \in \R^n$, let $\|x\|^*_C = \min \set{\|x\|_C, \|x\|_{-C}}$. For a point $x \in \R^n$, define $s(x) = 1$ if
$\|x\|_C \leq \|x\|_{-C}$ and $s(x) = -1$ if $\|x\|_C > \|x\|_{-C}$. From the notation, we have that $\|x\|_C^* =
 \|x\|_{s(x)C} = \|s(x)x\|_C$.

We begin with an extension of the AKS sieving lemma to the asymmetric setting. The following lemma will provide
the central tool for the SAP algorithm.

\begin{lemma}[Basic Sieve] Let $(x_1,y_1), (x_2,y_2), \dots, (x_N,y_N) \in \R^n \times \R^n$ denote a list
of pairs satisfying $y_i-x_i \in L$, $\|x_i\|_C^* \leq \beta$ and $\|y_i\|_C^* \leq D ~~ \forall i \in [N]$.
Then a clustering, $c: \set{1,\dots,N} \rightarrow J$, $J \subseteq [N]$, satisfying:
\label{lem:basic-sieve}
\[
{1.} ~~ |J| \leq 2\left(\frac{5}{\gamma}\right)^n \quad \quad 
{2.} ~~ \|y_i-y_{c(i)}+x_{c(i)}\|_C^* \leq \frac{1}{2} D + \beta \quad \quad
{3.} ~~ y_i-y_{c(i)}+x_{c(i)}-x_i \in L
\]
for all $i \in [N] \setminus J$, can be computed in deterministic $O(N \left(\frac{5}{\gamma}\right)^n)$-time. 
\end{lemma}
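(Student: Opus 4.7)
The plan is to generalize the classical AKS ball-covering argument to the asymmetric semi-norm $\|\cdot\|_C^*$ by clustering with respect to the symmetric body $C \cap -C$, which by the $\gamma$-symmetry hypothesis satisfies $\vol(C \cap -C) \geq \gamma^n \vol(C)$ and is therefore nontrivially large. The core routine processes the pairs in order and greedily builds $J \subseteq [N]$: add $i$ to $J$ if $y_i - y_j \notin (D/2)(C \cap -C)$ for every $j \in J$ chosen so far, and otherwise set $c(i)$ to be any $j \in J$ witnessing the failure. The assignment $c(i)=i$ for $i \in J$ is irrelevant since conditions (2) and (3) need only hold off $J$.

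To bound $|J|$, I would use a volume / packing argument. By construction, the translates $y_j + (D/4)(C \cap -C)$ for $j \in J$ are pairwise disjoint. Since $\|y_j\|_C^* \leq D$ forces $y_j \in DC \cup -DC$, and $(C \cap -C) \subseteq C$ and $(C \cap -C) \subseteq -C$, each translate is contained in $(5D/4)(C \cup -C)$. Comparing volumes,
\[
|J| \cdot (D/4)^n \gamma^n \vol(C) \leq \vol\bigl((5D/4)(C \cup -C)\bigr) \leq 2\,(5D/4)^n \vol(C),
\]
using $\vol(C \cup -C) \leq 2\vol(C)$, giving $|J| \leq 2(5/\gamma)^n$. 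For condition (2), the greedy rule ensures $y_i - y_{c(i)} \in (D/2)(C \cap -C)$, so both $\|y_i - y_{c(i)}\|_C$ and $\|y_i - y_{c(i)}\|_{-C}$ are at most $D/2$; combining with whichever of $\|x_{c(i)}\|_C$ or $\|x_{c(i)}\|_{-C}$ attains $\|x_{c(i)}\|_C^* \leq \beta$ and applying the ordinary triangle inequality for that matching semi-norm yields $\|y_i - y_{c(i)} + x_{c(i)}\|_C^* \leq D/2 + \beta$. Condition (3) is automatic since $y_i - y_{c(i)} + x_{c(i)} - x_i = (y_i - x_i) - (y_{c(i)} - x_{c(i)}) \in L$. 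The runtime $O(N(5/\gamma)^n)$ follows because each new $i$ is checked against at most $|J|$ current centers with a constant number of distance-oracle queries.

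The main obstacle is that $\|\cdot\|_C^*$ is not a norm: it lacks a triangle inequality, so the naive route of clustering via $(D/4)C$ (as in the symmetric AKS setting) controls only $\|y_i - y_{c(i)}\|_C$ and cannot combine with an $x_{c(i)}$ whose $\|\cdot\|_C^*$ is realized on the $-C$ side. Routing the clustering through the symmetric body $C \cap -C$ fixes this by simultaneously controlling both $\|y_i - y_{c(i)}\|_C$ and $\|y_i - y_{c(i)}\|_{-C}$, so either sign of $x_{c(i)}$ can absorb the error through its matching (one-sided) triangle inequality. The $\gamma$-symmetry assumption is exactly what keeps $\vol(C \cap -C)$ large enough for the packing bound to preserve the exponential base of $5/\gamma$, with the factor of $2$ out front accounting for the asymmetric enclosure $\vol(C \cup -C) \leq 2\vol(C)$. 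Getting the constant $5$ (rather than a looser $9$) requires the precise pairing of a separation radius $D/2$ for centers, a packing radius $D/4$ for the volume argument, and the enlargement bound $D(C \cup -C) + (D/4)(C \cap -C) \subseteq (5D/4)(C \cup -C)$.
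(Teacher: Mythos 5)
Your proof is correct and follows essentially the same route as the paper: a greedy clustering whose size is bounded by packing translates of $\frac{D}{4}(C\cap -C)$ inside $\frac{5D}{4}(C\cup -C)$, with condition (2) obtained by matching the sign realizing $\|x_{c(i)}\|_C^*$ and applying the one-sided triangle inequality. The only (immaterial) difference is that you cluster with the symmetric test $\|y_i-y_j\|_{C\cap -C}\le \frac{D}{2}$, whereas the paper uses the more permissive one-sided test $\|y_i-y_j\|_{s(x_j)C}\le \frac{D}{2}$ and then passes to $C\cap -C$ only for the volume bound; both yield identical guarantees.
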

\begin{proof} \hspace{1em}
\paragraph{Algorithm:}
We build the set $J$ and clustering $c$ iteratively, starting from $J = \emptyset$, in the following manner.  For each
$i \in [N]$, check if there exists $j  \in J$ such that $\|y_i-y_j\|_{s(x_j)C} \leq \frac{D}{2}$. If such a $j$ exists,
set $c(i) = j$. Otherwise, append $i$ to the set $J$ and set $c(i) = i$. Repeat.

\paragraph{Analysis:}
We first note, that for any $i,j \in [N]$, we have that $y_i-y_j+x_j-x_i = (y_i-x_i)-(y_j-x_j) \in L$
since by assumption both $y_i-x_i,y_j-x_j \in L$. Hence, property (3) is trivially satisfied by the clustering $c$. 

We now check that the clustering satisfies property (2). For $i \in [N] \setminus J$, note that by construction we have
that $\|y_i-y_{c(i)}\|_{sC} \leq \frac{D}{2}$ where $s = s(x_{c(i)})$. Therefore by the triangle inequality, we have that
\begin{align*}
\|y_i-y_{c(i)}+x_{c(i)}\|_C^* &\leq \|y_i-y_{c(i)}+x_{c(i)}\|_{sC} \leq \|y_i-y_{c(i)}\|_{sC} + \|x_{c(i)}\|_{sC} \\
&= \|y_i-y_{c(i)}\|_{sC} + \|x_{c(i)}\|_C^* \leq \frac{D}{2} + \beta
\end{align*}
as required.

We now show that $J$ satisfies property $(1)$. By construction of $J$, we know that for $i,j \in J$, $i < j$
that $\|y_j-y_i\|_{s(x_i)C} > \frac{D}{2}$. Therefore we have that
\[
\|y_j-y_i\|_{s(x_i)C} > \frac{D}{2} \Rightarrow \|y_j-y_i\|_{C \cap -C} = \|y_i-y_j\|_{C \cap -C} > 
\frac{D}{2} \quad \left(\text{by symmetry of } C \cap -C\right)
\] 
From here, we claim that
\begin{equation}
\label{eq:ls-1}
y_i + \frac{D}{4}(C \cap -C) \cap y_j + \frac{D}{4}(C \cap -C) = \emptyset \text{.}
\end{equation}
Assume not, then we may pick $z$ in the intersection above. Then by definition, we have that
\begin{align*}
\|y_j-y_i\|_{C \cap -C} &= \|(y_j-z)+(z-y_i)\|_{C \cap -C} \leq \|y_j-z\|_{C \cap -C} + \|z-y_i\|_{C \cap -C} \\
                        &= \|z-y_j\|_{C \cap -C} + \|z-y_i\|_{C \cap -C} \leq \frac{D}{4} + \frac{D}{4} = \frac{D}{2}
\end{align*}
a clear contradiction. 

For each $i \in [N]$, we have by assumption that $\|y_i\|_C^* \leq D \Leftrightarrow y_i \in D(C \cup -C)$. Therefore, we
see that
\begin{align}
\label{eq:ls-2}
\begin{split}
y_i + \frac{D}{4}(C \cap -C) &\subseteq D(C \cup -C) + \frac{D}{4}(C \cap -C) \\
                                 &= D((C + \frac{1}{4}(C \cap -C)) \cup (-C + \frac{1}{4}(C \cap -C))) \\
                                 &\subseteq D((C + \frac{1}{4}C) \cup (-C + \frac{1}{4}(-C))) = \frac{5}{4}D(C \cup -C) \\
\end{split}
\end{align}
From \eqref{eq:ls-1}, \eqref{eq:ls-2}, and since $J \subseteq [N]$, we have that 
\begin{align*}
|J| &= \frac{\vol(\set{y_i: i \in J} +\frac{D}{4}(C \cap -C))}{\vol(\frac{D}{4}(C \cap -C))} 
    \leq \frac{\vol(\frac{5}{4}D(C \cup -C))}{\vol(\frac{D}{4}(C \cap -C))} \\
    &\leq \frac{\left(\frac{5}{4}\right)^n(\vol(DC)+\vol(-DC))}{\left(\frac{\gamma}{4}\right)^n \vol(DC)} 
     = 2\left(\frac{5}{\gamma}\right)^n
\end{align*}
as needed.

To bound the running time of the clustering algorithm is straightforward. For each element of $[N]$, we iterate
once through the partially constructed set $J$. Since $|J| \leq 2\left(\frac{5}{\gamma}\right)^n$ throughout the
entire algorithm, we have that the entire runtime is bounded by $O(N \left(\frac{5}{\gamma}\right)^n)$ as required.
\end{proof}

\begin{definition}[Sieving Procedure]
For a list of pairs $(x_1,y_1),\dots,(x_N,y_N)$ as in Lemma \ref{lem:basic-sieve}, we call an application of
the \emph{Sieving Procedure} the process of computing the clustering $c:[N] \rightarrow J$, and outputting the list of pairs
$(x_i, y_i - y_{c(i)} + x_{c(i)})$ for all $i \in [N] \setminus J$. 
\end{definition}
Note that the \emph{Sieving Procedure} deletes the set of pairs associated with the cluster centers $J$, and
combines the remaining pairs with their associated centers.

We remark some differences with the standard AKS sieve. Here the Sieving Procedure does not guarantee that $\|y_i\|_C$
decreases after each iteration. Instead it shows that at least one of $\|y_i\|_C$ or $\|-y_i\|_C$ decreases appropriately
at each step. Hence the region we must control is in fact $D (C \cup -C)$, which we note is generally non-convex.
Additionally, our analysis shows that how well we can use $\|\cdot\|_C$ to sieve only depends on $\vol(C \cap -C) /
\vol(C)$, which is a very flexible global quantity. For example, if $C = [-1,1]^{n-1} \times [-1,2^n]$ (i.e. a cube with
one highly skewed coordinate) then $C$ is still $\frac{1}{2}$-symmetric, and hence the sieve barely notices the asymmetry.

The algorithm for approximate SAP we describe presently will construct a list of large pairs as above, and use repeated
applications of the \emph{Sieving Procedure} to create shorter and shorter vectors.

The next lemma allows us to get a crude estimate on the value of $\lambda(C,L,M)$.
\begin{lemma} 
\label{lem:guess}
Let $C \subseteq \R^n$ a $(0,r,R)$-centered convex body, $L \subseteq \R^n$ be an $n$-dimensional lattice,
and $M \subseteq \R^n$, $\dim(M) \leq n-1$, be a linear subspace. Then a number $\nu > 0$ satisfying
\[
\nu \leq \lambda(C,L,M) \leq 2^n ~ \frac{R}{r} ~ \nu
\]
can be computed in polynomial time.
\end{lemma}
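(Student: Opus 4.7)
The plan is to combine LLL basis reduction with the norm equivalence between $\|\cdot\|_C$ and the Euclidean norm. The $(0,r,R)$-centeredness of $C$ gives $\|x\|_2/R \leq \|x\|_C \leq \|x\|_2/r$ for every $x \in \R^n$, so it suffices to produce in polynomial time a vector $v \in L \setminus M$ whose Euclidean length is within a factor $2^{(n-1)/2}$ of $\lambda_2(L,M) := \inf_{x \in L \setminus M} \|x\|_2$. Indeed, setting $\nu = r\|v\|_C/(2^n R)$ will then give the trivial lower bound $\nu \leq \|v\|_C \cdot r/(2^n R) \leq \|v\|_C \leq \lambda(C,L,M) \cdot 2^n R/r \cdot (r/(2^n R))$... more precisely, $\nu \leq \lambda(C,L,M)$ follows from $\|v\|_C \leq 2^n(R/r)\lambda(C,L,M)$, while $\lambda(C,L,M) \leq \|v\|_C \leq (2^n R/r)\nu$ follows from $v \in L \setminus M$ and the definition of $\nu$.

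To find $v$, I first compute an LLL-reduced basis $b_1,\dots,b_n$ of $L$ in polynomial time. Since $\dim M \leq n-1$, not all $b_i$ can lie in $M$, so the index
\[
i^* = \min\set{i : b_i \notin M + \mathrm{span}(b_1,\dots,b_{i-1})}
\]
is well defined, and I take $v = b_{i^*}$; clearly $v \in L \setminus M$. For the approximation guarantee, pick any $x \in L \setminus M$ with $\|x\|_2 = \lambda_2(L,M)$ and write $x = \sum_i c_i b_i$ with $c_i \in \Z$. Let $k$ be the largest index with $c_k \neq 0$. The relation $x - c_k b_k \in \mathrm{span}(b_1,\dots,b_{k-1})$ together with $x \notin M$ forces $b_k \notin M + \mathrm{span}(b_1,\dots,b_{k-1})$, so $i^* \leq k$.

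The conclusion invokes the standard LLL estimates $\|b_j^*\|_2^2 \leq 2^{i-j}\|b_i^*\|_2^2$ for $j \leq i$ and $\|b_i\|_2^2 \leq 2^{i-1}\|b_i^*\|_2^2$, together with $\|x\|_2 \geq |c_k|\|b_k^*\|_2 \geq \|b_k^*\|_2$, which chain together to give
\[
\|v\|_2 \leq 2^{(i^*-1)/2}\|b_{i^*}^*\|_2 \leq 2^{(i^*-1)/2}\cdot 2^{(k-i^*)/2}\|b_k^*\|_2 \leq 2^{(n-1)/2}\lambda_2(L,M).
\]
Combining with $\|v\|_C \leq \|v\|_2/r$ and $\lambda(C,L,M) \geq \lambda_2(L,M)/R$ yields $\|v\|_C \leq 2^n(R/r)\lambda(C,L,M)$, so the output $\nu = r\|v\|_C/(2^n R)$ is the desired estimate.

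The only real obstacle is identifying the correct ``subspace-aware'' definition of $i^*$ that allows the standard LLL last-coefficient argument to go through with $M$ absorbed into the lower part of the basis; once this is in place, the analysis is routine and every step runs in polynomial time.
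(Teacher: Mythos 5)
Your proof is correct and takes essentially the same route as the paper, which simply invokes Lemma~4.1 of Bl\"omer--Naewe (itself the LLL argument you spell out) together with the norm equivalence $\|x\|_2/R \leq \|x\|_C \leq \|x\|_2/r$; your version just makes the cited lemma explicit. One small slip: the claim that $x - c_k b_k \in \mathrm{span}(b_1,\dots,b_{k-1})$ and $x \notin M$ force $b_k \notin M + \mathrm{span}(b_1,\dots,b_{k-1})$ is false as stated (take $M = \mathrm{span}(e_2)$, $b_1 = e_1$, $b_2 = e_2$, $x = e_1 + e_2$), but the conclusion $i^* \leq k$ you actually need is still true, since $i^* > k$ would force, by induction on $i$, that $b_1,\dots,b_k \in M$ and hence $x \in M$.
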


The above lemma follows directly from Lemma 4.1 of \cite{DBLP:journal/tcs/BlomerN09}. They prove it for $\ell_p$
balls, but it is easily adapted to the above setting using the relationship $\frac{1}{r} \|x\|_2 \leq \|x\|_C
\leq \frac{1}{R} \|x\|_2$ (since $C$ is $(0,r,R)$-centered).

The following technical lemma will be needed in the analysis of the SAP algorithm.
\begin{lemma} Take $v \in \R^n$ where $\beta \leq \|v\|_C \leq \frac{3}{2} \beta$. Define $C_v^+ = \beta C \cap (v -
\beta C)$ and $C_v^- = (\beta C - v) \cap - \beta C$. Then 
\[
\frac{\vol(C_v^+)}{\vol(\beta C)} = \frac{\vol(C_v^-)}{\vol(\beta C)} \geq \left(\frac{\gamma}{4}\right)^n 
\]
Furthermore, $\intr(C_v^+) \cap \intr(C_v^-) = \emptyset$.
\label{lem:li}
\end{lemma}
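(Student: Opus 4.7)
The plan is to handle the three claims of the lemma in order of increasing subtlety.

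For the equality $\vol(C_v^+) = \vol(C_v^-)$, I would observe directly from the definitions that
\[
-C_v^+ = -\beta C \cap -(v - \beta C) = -\beta C \cap (\beta C - v) = C_v^-,
\]
so $C_v^-$ is just the reflection of $C_v^+$ through the origin, and the two volumes agree trivially.

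For the volume lower bound, I plan to inscribe a large translate of $(\beta/4)(C \cap -C)$ inside $C_v^+$ and then invoke $\gamma$-symmetry of $C$. The natural center is $v/2$: since $\|v\|_C \leq 3\beta/2$, we have $\|v/2\|_C \leq 3\beta/4$, placing $v/2$ comfortably inside both $\beta C$ and $v - \beta C$. For any $w \in (\beta/4)(C \cap -C)$, both $\|w\|_C$ and $\|-w\|_C$ are at most $\beta/4$, so the triangle inequality gives $\|v/2 + w\|_C \leq \beta$ and $\|v - (v/2+w)\|_C = \|v/2 - w\|_C \leq \beta$, which places $v/2 + w$ in $C_v^+$. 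The desired estimate then follows from $\vol(C \cap -C) \geq \gamma^n \vol(C)$.

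The disjointness of interiors is where I expect the real obstacle to lie, since the direct triangle-inequality approach is not sharp enough. If $x \in \intr(C_v^+) \cap \intr(C_v^-)$, the four strict inequalities $\|x\|_C, \|v - x\|_C, \|-x\|_C, \|v + x\|_C < \beta$ must all hold, but combining pairs of them only yields $\|v\|_C < 2\beta$, which is consistent with $\|v\|_C \geq \beta$. To extract a genuine contradiction, my plan is to apply Hahn--Banach to the sublinear gauge $\|\cdot\|_C$ to obtain a linear functional $\phi$ with $\phi(y) \leq \|y\|_C$ for every $y$ and $\phi(v) = \|v\|_C$. Evaluating $\phi$ on the four strict inequalities yields $\phi(x) < \beta$, $\phi(x) > \|v\|_C - \beta$, $\phi(x) > -\beta$, and $\phi(x) < \beta - \|v\|_C$, which together give
\[
\|v\|_C - \beta < \phi(x) < \beta - \|v\|_C.
\]
This forces $\|v\|_C < \beta$, contradicting the hypothesis $\|v\|_C \geq \beta$ and establishing the disjointness.
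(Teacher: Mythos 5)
Your proof is correct and follows essentially the same route as the paper: the volume bound via the inscribed translate $\tfrac{v}{2} + \tfrac{\beta}{4}(C \cap -C)$ is identical, and your Hahn--Banach functional $\phi$ is exactly the dual vector $y \in C^*$ with $\langle y, v\rangle = \|v\|_C$ that the paper uses for the disjointness claim. The only cosmetic difference is that you run the separation argument directly on the interiors with strict inequalities, whereas the paper shows any common point of the closed sets lies on both boundaries; the underlying idea is the same.
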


The following is the core subroutine for the SAP solver.
\begin{algorithm}
\caption{ShortVectors(C,L,M,$\beta$,$\eps$)}
\begin{algorithmic}[1]
\REQUIRE $(0,r,R)$-centered $\gamma$-symmetric convex body $C \subseteq \R^n$, basis $B \in \Q^{n \times n}$ for $L$, 
linear subspace $M \subseteq \R^n$, scaling parameter $\beta > 0$, tolerance parameter $0 < \eps \leq \frac{1}{2}$ 
\STATE $D \leftarrow n \max_{1 \leq i \leq n} \|B_i\|_C$
\STATE $\vN_0 \leftarrow 4\ceil{6\ln\left(\frac{D}{\beta}\right)}\left(\frac{20}{\gamma^2}\right)^n 
                         + 8\left(\frac{36}{\gamma^2 \eps}\right)^n$, 
       $~\eta \leftarrow \frac{2^{-(n+1)}}{\vN_0}$
\STATE Create pairs $(\vx_1^0,\vy_1^0)$,$(\vx_2^0,\vy_2^0)$,$~\dots$,$(\vx_{\vN_0}^0,\vy_{\vN_0}^0)$ as follows:
for each $i \in [\vN_0]$,\\ compute $X$ an $\eta$-uniform sample over $\beta C$ (using Theorem \ref{thm:unif-sampler}) and
a uniform $s$ in $\set{-1,1}$, \\
and set $\vx_i^0 \leftarrow s X$ and $\vy_i^0 \leftarrow \vx_i^0 \bmod B$. 
\STATE $t \leftarrow 0$
\WHILE{$D \geq 3\beta$}
   \STATE Apply \emph{Sieving Procedure} to $(\vx_1^t,\vy_1^t)$,$~\dots$,$(\vx_{\vN_t}^t,\vy_{\vN_t}^t)$ yielding
          $(\vx_1^{t+1},\vy_1^{t+1})$,$~\dots$,$(\vx_{\vN_{t+1}}^{t+1},y_{\vN_{t+1}}^{t+1})$
   \STATE $D \leftarrow \frac{D}{2} + \beta$ and $t \leftarrow t + 1$
\ENDWHILE

\RETURN $\set{\vy_i^t-\vx_i^t-(\vy_j^t-\vx_j^t):~ i,j \in [\vN_t]} \setminus M$
\end{algorithmic}
\end{algorithm}

We relate some important details about the the SAP algorithm. Our algorithm for SAP follows a standard procedure. We
first guess a value $\beta$ satisfying $\beta \leq \lambda(C,L,M) \leq \frac{3}{2} \beta$, and then run ShortVectors on
inputs $C,L,M,\beta$ and $\eps$. We show that for this value of $\beta$, ShortVectors outputs a $(1+\eps)$ approximate
solution with overwhelming probability. 

As we can be seen above, the main task of the ShortVectors algorithm, is to generate a large quantity of random vectors,
and sieve them until they are all of relatively small size (i.e. $3\beta \leq 3\lambda(C,L,M)$). ShortVectors then
examines all the differences between the sieved vectors in the hopes of finding one of size $(1+\eps)\lambda(C,L,M)$ in
$L \setminus M$. ShortVectors, in fact, needs to balance certain tradeoffs. On the one hand, it must sieve enough times
to guarantee that the vector differences have small size. On the other, it must use ``large'' perturbations sampled from
$\beta (C \cup -C)$, to guarantee that these differences do not all lie in $M$.

We note that the main algorithmic differences with respect to \cite{DBLP:conf/icalp/BlomerN07, DBLP:conf/fsttcs/ArvindJ08} is
the use of a modified sieving procedure as well as a different sampling distribution for the perturbation vectors (i.e.
over $\beta(C \cup -C)$ instead of just $\beta C$). These differences also make the algorithm's analysis more
technically challenging.

\begin{theorem}[Approximate-SAP] For $0 < \eps \leq \frac{1}{2}$ lattice vector $y \in L \setminus M$ such that
$\|y\|_C \leq (1+\eps) \lambda(C,L,M)$ can be computed in time $O(\frac{1}{\gamma^4 \eps^2})^n$
with probability at least $1-2^{-n}$. Furthermore, if $\lambda(C,L,M) \leq t \lambda_1(C,L)$, $t \geq 2$, a vector $y \in L
\setminus M$ satisfying $\|y\|_C = \lambda(C,L,M)$, can be with computed in time $O\left(\frac{1}{\gamma^4 t^2}\right)^n$
with probability at least $1-2^{-n}$.
\label{thm:approx-sap}
\end{theorem}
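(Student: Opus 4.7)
My plan is to analyze the ShortVectors subroutine and wrap it in a search over $\beta$ to handle the unknown value of $\lambda := \lambda(C,L,M)$.

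First I would use Lemma~\ref{lem:guess} to compute $\nu$ with $\nu \leq \lambda \leq 2^n(R/r)\nu$ in polynomial time, and then run ShortVectors on each $\beta_k = (3/2)^k \nu$, $k = 0, 1, \ldots, K = O(n + \log(R/r))$, keeping the shortest returned vector in $L\setminus M$. Exactly one $k$ satisfies $\beta_k \le \lambda \le (3/2)\beta_k$, so it suffices to prove that ShortVectors on this good $\beta$ returns a $(1+\eps)$-approximate SAP solution with probability $\ge 1 - 2^{-n}$; the polynomial overhead from the grid does not affect the $O(1/(\gamma^4 \eps^2))^n$ leading term.

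Next I would analyze ShortVectors itself. Because $D$ updates via $D \to D/2 + \beta$, after $T = O(\log(D_0/\beta))$ iterations $D$ drops below $3\beta$; by Lemma~\ref{lem:basic-sieve} each iteration removes at most $2(5/\gamma)^n$ pairs, and the choice of $\vN_0$ ensures at least $\vN_T \ge 8(36/(\gamma^2\eps))^n$ pairs survive. Each survivor $(\vx_i^T,\vy_i^T)$ satisfies $\vx_i^T = \vx_i^0 \in \beta(C \cup -C)$, $z_i := \vy_i^T - \vx_i^T \in L$, and $\|\vy_i^T\|_C^* < 3\beta$. The runtime is dominated by enumerating the $O(\vN_T^2) = O(1/(\gamma^4\eps^2))^n$ output differences $z_i - z_j$ and returning the shortest lying in $L \setminus M$; the sieving cost $O(T \cdot \vN_0 \cdot (5/\gamma)^n)$ is strictly smaller asymptotically.

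The main obstacle is the correctness claim that with probability $\ge 1 - 2^{-n}$ the output set contains some $z_i - z_j \in L \setminus M$ of $\|\cdot\|_C$-norm at most $(1+\eps)\lambda$. I would use a measure-preserving swap argument built around a fixed optimizer $v \in L\setminus M$ of norm $\|v\|_C = \lambda \in [\beta,(3/2)\beta]$. By Lemma~\ref{lem:li}, the interior-disjoint regions $C_v^+ \subseteq \beta C$ and $C_v^- \subseteq -\beta C$ each occupy at least a $(\gamma/4)^n$-fraction of $\beta C$, and the translation $x \mapsto x - v$ bijects $C_v^+$ with $C_v^-$ while preserving the coset $x \bmod B$ (since $v \in L$). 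Conditioning on the sieve's execution trace---cluster centers (at most $2T(5/\gamma)^n$ of them), their data and signs, and the $\vy_i^0$ of non-centers---the non-center perturbations $\vx_i^0$ are independent, each uniform on $\beta(C\cup -C) \cap (\vy_i^0 + L)$. Since the clustering decision for a non-center $i$ depends only on $\vy_i^0$ and on the center data (not on $\vx_i^0$ beyond its residue), swapping $\vx_i^0 \in C_v^+$ with $\vx_i^0 - v \in C_v^-$ leaves $\vy_i^T$ unchanged and shifts $z_i$ by $v$. Combining (i) a covering of the region $\{y : \|y\|_C^* < 3\beta\}$ by $(O(1/\eps))^n$ balls of $\|\cdot\|_{C \cap -C}$-radius $\Theta(\eps\beta)$ with (ii) the lower bound $\vN_T \ge 8(36/(\gamma^2\eps))^n$, pigeonhole produces many pairs $(i,j)$ of surviving non-centers with $\|\vy_i^T - \vy_j^T\|_{C\cap -C} = O(\eps\beta)$. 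Within such a pair, the event (over the independent randomness of $\vx_i^0,\vx_j^0$) that both perturbations lie in $C_v^+ \cup C_v^-$ occurs with probability $\ge (\gamma/4)^{2n}$, and in at least one of the resulting swap configurations the output difference equals $\pm v + w$, where $w \in L$ has $\|w\|_C = O(\eps\beta)$ and thus lies in $L \setminus M$ (since $v \notin M$ and $w$ is small) with norm at most $(1+\eps)\lambda$. A standard coupling together with a union bound over pairs drives the overall failure probability below $2^{-n}$.

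For the furthermore clause, setting $\eps = 1/(2t)$ gives an $O(\eps\beta) = O(\lambda/t) \le \lambda_1(C,L)/2 < \lambda_1(C,L)$ error term in the above construction; since every nonzero lattice vector has $\|\cdot\|_C$-norm at least $\lambda_1(C,L)$, the error $w$ is forced to vanish, so the output difference equals $\pm v$ exactly. Scanning the output list for the shortest element in $L \setminus M$ therefore returns a vector of norm exactly $\lambda$, and the runtime becomes $O(1/(\gamma^4 t^2))^n$ since $\eps^{-2} = \Theta(t^2)$.
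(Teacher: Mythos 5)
Your overall strategy is the same as the paper's: a polynomial grid search over $\beta$, the pair-counting analysis of the sieve, a measure-preserving ``swap'' built from the translation by the optimizer $v$ on the interior-disjoint regions $C_v^{\pm}$ of Lemma~\ref{lem:li}, a covering argument in the $\|\cdot\|_{C\cap -C}$ metric to force the residual $\delta$ to be short (hence in $M$, hence harmless), and the choice $\eps=\Theta(1/t)$ for the exact variant. However, your success-probability accounting has a genuine gap in the order in which you spend the $(\gamma/4)^n$ ``goodness'' probability. You first fix the execution trace, cover \emph{all} $\vN_T$ survivors, and only then charge each close pair $(i,j)$ a probability $(\gamma/4)^{2n}$ that both perturbations land in $C_v^+\cup C_v^-$. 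But the covering of $O(\beta)(C\cup -C)$ by translates of $\Theta(\eps)\beta(C\cap -C)$ costs $\Theta(1/(\gamma\eps))^n$ balls, not $(O(1/\eps))^n$ as you wrote (you lose a $\gamma^n$ from the asymmetry, exactly as in the paper's bound $|\Lambda|\le 2(9/(\gamma\eps))^n$). With $\vN_T = \Theta(36/(\gamma^2\eps))^n$ survivors this leaves only $(O(1/\gamma))^n$ pairs per covering cell, and $(O(1/\gamma))^n\cdot(\gamma/4)^{2n} = (O(\gamma))^n \to 0$, so the expected number of successful pairs vanishes and no union bound or coupling can push the failure probability below $2^{-n}$. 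The paper avoids this by spending the $(\gamma/4)^n$ factor \emph{once, at stage $0$}: a Chernoff bound shows that at least $\frac12(\gamma/4)^n\vN_0$ pairs are good before any sieving; goodness is preserved because the sieve never modifies $\vx_i$; so at least $N_G=4(9/(\gamma\eps))^n$ \emph{good} pairs survive, the covering is applied only to those, and each good non-center then produces a vector of the form $\pm v+\delta$ with \emph{constant} probability at least $1/3$, independently across covering cells. You should restructure your argument this way.

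A secondary inaccuracy: conditioned on $\vy_i^0$, the perturbation $\vx_i^0$ is \emph{not} uniform on $\beta(C\cup -C)\cap(\vy_i^0+L)$; its density is proportional to $w(x)=I[x\in\beta C]+I[x\in-\beta C]$, which takes both values $1$ and $2$. Consequently the deterministic swap $x\mapsto x\mp v$ between $C_v^+$ and $C_v^-$ is not measure-preserving, and one must use the paper's randomized replacement $F_v$, which keeps $x$ with probability $w(x)/(w(x)+w(f_v(x)))$ and flips otherwise; this is also where the flip probability $\ge 1/3$ (rather than $1/2$) comes from. These two repairs bring your argument back to the paper's proof; the remaining steps (validity of any vector satisfying $(\dagger)$, the runtime bound dominated by $\vN_0^2$, and the exact case via $\|w-v\|_C<\lambda_1(C,L)\Rightarrow w=v$) are correct as you state them.
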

\begin{proof}\hspace{1em}
\paragraph{Algorithm:}
The algorithm for $(1+\eps)$-SAP is as follows: 
\begin{enumerate}
\item Using Lemma \ref{lem:guess} compute a value $\nu$ satisfying $\nu \leq \lambda(C,L,M) \leq
2^n \frac{R}{r} \nu$.
\item For each $i \in 0,1,\dots,\ceil{\ln(2^nR/r)/\ln(3/2)}$, let $\beta = (3/2)^i \nu$ and
run ShortVectors$(C,L,\beta,\eps)$.
\item Return the shortest vector found with respect to $\|\cdot\|_C$ in the above runs of ShortVectors.
\end{enumerate}
 
\paragraph{Preliminary Analysis:}
In words, the algorithm first guesses a good approximation of $\lambda(C,L,M)$ (among polynomially many choices)
and runs the ShortVectors algorithm on these guesses. By design, there will be one iteration of the algorithm
where $\beta$ satisfies $\beta \leq \lambda(C,L,M) \leq \frac{3}{2} \beta$. We prove that for this setting
of $\beta$ the algorithm returns a $(1+\eps)$-approximate solution to the SAP problem with probability at least $1-2^{-n}$.

Take $v \in L \setminus M$ denote an optimal solution to the SAP problem, i.e. $v$ satisfies $\|v\|_C = \lambda(C,L,M)$.
We will show that with probability at least $1-2^{-n}$, a small pertubation of $v$ (Claim 4) will be in the set returned by
ShortVectors when run on inputs $C$,$L$,$M$,$\beta$, and $\eps$. 

Within the ShortVectors algorithm, we will assume that the samples generated over $\beta C$ (line 3) are exactly
uniform. By doing this, we claim the probability that ShortVectors returns a $(1+\eps)$-approximate solution to the SAP
problem by changes by at most $2^{-(n+1)}$. To see this, note that we generate exactly $\vN_0$ such samples, all of
which are $\eta$-uniform. Therefore by Lemma \ref{lem:tvd}, we have that the total variation distance between the vector
of approximately uniform samples and truly uniform samples is at most $N_0 \eta = 2^{-(n+1)}$. Lastly, the event that 
ShortVectors returns $(1+\eps)$-approximate solution is a random function of these samples, and hence when
switching uniform samples for $\eta$-uniform ones, the probability of this event changes by at most
$2^{-(n+1)}$. Therefore to prove the theorem, it suffices to show that the failure probability under truly uniform
samples is at most $2^{-(n+1)}$.

In the proof, we adopt all the names of parameters and variables defined in the execution of
ShortVector. We denote the pairs at stage $t$ as $(\vx_1^t,\vy_1^t),$ $\dots$
$,(\vx_{N_t}^t,\vy_{N_t}^t)$. We also let $C_v^+,C_v^-$ be as in Lemma \ref{lem:basic-sieve}. For any stage $t \geq 0$,
we define the pair $(\vx^t_i,\vy^t_i)$, $i \in [\vN_t]$, as \emph{good} if $\vx_i^t \in \intr(C_v^+) \cup \intr(C_v^-)$.

\paragraph{Claim 1:} Let $\mathcal{G}$ denote the event that there are at least
$\frac{1}{2}\left(\frac{\gamma}{4}\right)^n N_0$ \emph{good} pairs at stage $0$. Then $\mathcal{G}$ occurs with
probability least $1-e^{-\frac{1}{48}\gamma^n N_0}$.

Let $G_i = I[\vx_i^0 \in \intr(C_v^+) \cup \intr(C_v^-)]$ for $i \in [\vN_0]$ denote the indicator random variables
denoting whether $(\vx_i^0,\vy_i^0)$ is \emph{good} or not. Let $s_i$, $i \in [N_0]$, denote the $\set{-1,1}$ random
variable indicating whether $\vx_i^0$ is sampled uniformly from $\beta C$ or $-\beta C$. Since $\beta \leq \|v\|_C \leq
\frac{3}{2} \beta$, by lemma \ref{lem:li} we have that
\begin{align*}
\Pr[G_i=1] &\geq \Pr(\vx_i^0 \in \intr(C_v^+) | s_i = 1)\Pr(s_i = 1) + 
                                              \Pr(\vx_i^0 \in \intr(C_v^-) | s_i = -1)\Pr(s_i = -1) \\
     &= \frac{1}{2} \frac{\vol(\beta C \cap (v-\beta C))}{\vol(\beta C)} 
        + \frac{1}{2} \frac{\vol((\beta C-v) \cap (-\beta C))}{\vol(-\beta C)} \geq \left(\frac{\gamma}{4}\right)^n
\end{align*}

From the above we see that $E[\sum_{i=1}^N G_i] \geq \left(\frac{\gamma}{4}\right)^n N_0$. Since the $G_i$'s are
iid Bernoullis, by the Chernoff bound we get that $\Pr[\mathcal{G}] = \Pr[\sum_{i=1}^N G_i < \frac{1}{2}
\left(\frac{\gamma}{4}\right)^n N_0] \leq e^{-\frac{1}{48} \gamma^n N_0}$, as needed.

\paragraph{Claim 2:} Let $T$ denote the last stage of the sieve (i.e. value of $t$ at end of the while loop). Then
conditioned on $\mathcal{G}$, the number of \emph{good} pairs at stage $T$ is at least $N_G =
4\left(\frac{9}{\gamma\eps}\right)^n$.

Examine $(\vx_i^0,\vy^0_i)$ for $i \in [\vN_0]$. We first claim that $\|\vy_i^0\|^*_C \leq D$. To see this note
that $\vy_i^0 = B z$ where $z = B^{-1}\vx_i^0 - \floor{B^{-1}\vx_i^0} \in [0,1)^n$. Hence
\[
\|\vy_i^0\|_C^* \leq \|\vy_i^0\|_C = \|\sum_{i=1}^n B_i z_i\|_C \leq \sum_{i=1}^n z_i \|B_i\|_C \leq n \max_{1 \leq i \leq n} \|B_i\|_C = D
\]
as needed. Let $D_t = \max \set{\|\vy_i^t\|_C^*: i \in [\vN_t]}$, where we note that above shows that $D_0 \leq D$. By Lemma
\ref{lem:basic-sieve}, we know that $\vN_t \geq \vN_{t-1} - 2\left(\frac{5}{\gamma}\right)^n$ and that $D_t \leq
\frac{1}{2}D_{t-1} + \beta$ for $t \geq 1$. For $D_t \geq 3 \beta$, we see that $\frac{1}{2}D_t + \beta \leq
\frac{5}{6} D_t$. Given the previous bounds, an easy computation reveals that $T \leq
\ceil{\frac{\ln(\frac{D}{3\beta})}{\ln(\frac{6}{5})}} \leq \ceil{6\ln(\frac{D}{\beta})}$.

From the above, we see that during the entire sieving phase we remove at most $T(2)\left(\frac{5}{\gamma}\right)^n \leq
2\ceil{6\ln(\frac{D}{\beta})}\left(\frac{5}{\gamma}\right)^n$ pairs. Since we never modify the $\vx_i^t$'s during the sieving
operation, any pair that starts off as \emph{good} stays \emph{good} as long as it survives through the last stage. Since we
start with at least $\frac{1}{2} \left(\frac{\gamma}{4}\right)^n N_0$ \emph{good} pairs in stage $0$, we are 
left with at least 
\begin{align*}
\frac{1}{2} \left(\frac{\gamma}{4}\right)^n N_0 - 2\ceil{6\ln\left(\frac{D}{\beta}\right)}\left(\frac{5}{\gamma}\right)^n
	\geq 4\left(\frac{9}{\gamma \eps}\right)^n = N_G
\end{align*}
\emph{good} pairs at stage $T$ as required.

\paragraph{Modifying the output:}
Here we will analyze a way of modifying the output the ShortVectors, which will maintain the output distribution
but make the output analysis far simpler. 

Let $w(x) = I[x \in \beta C] + I[x \in -\beta C]$. Letting $X$ be uniform in $\beta C$ and $s$ be uniform
in $\set{-1,1}$ (i.e. the distribution in line 3), for $x \in \beta (C \cup -C)$ we have that
\begin{align}
\label{eq:density}
\begin{split}
\mathrm{d}\Pr[sX = x] &= \mathrm{d}\Pr[X = x]\Pr[s = 1] + \mathrm{d}\Pr[X = -x]\Pr[s = -1] = 
\frac{1}{2}\left(\frac{I[x \in C]}{\vol(\beta C)} + \frac{I[x \in -C]}{\vol(\beta C)}\right) \\
&= \frac{w(x)}{2\vol(\beta(C))}
\end{split}
\end{align}

Examine the function $f_v:\beta (C \cup -C) \rightarrow \beta (C \cup -C)$ defined by
\[
f_v(x) = \begin{cases} x-v:& x \in \intr(C_v^+) \\ x+v:& x \in \intr(C_v^-) \\ x:& \text{otherwise} \end{cases}
\]
Since $\intr(C_v^+) \cap \intr(C_v^-) = \emptyset$, it is easy to see that $f_v$ is a well-defined bijection on $\beta
(C \cup - C)$ satisfying $f_v(f_v(x)) = x$. Furthemore by construction, we see that $f_v(\intr(C_v^+)) = \intr(C_v^-)$
and $f_v(\intr(C_v^-)) = \intr(C_v^+)$. Lastly, note that for any $x \in \beta (C \cup -C)$, that $f_v(x) \equiv x \bmod
B$ since $f_v(x)$ is just a lattice vector shift of $x$.

Let $F_v$ denote the random function where 
\[
F_v(x) = \begin{cases} x &: \text{with probability } ~~ \frac{w(x)}{w(x) + w(f_v(x))} \\
                       f_v(x) &: \text{with probability } ~~ \frac{w(f_v(x))}{w(x) + w(f_v(x))} \end{cases}
\]
Here, we intend that different applications of the function $F_v$ all occur with independent randomness.
Next we define the function $c_v$ as
\[
c_v(x,y) = \begin{cases} f_v(x) &: \|y-f_v(x)\|_C^* < \|y-x\|_C^* \\ x &: \text{otherwise} \end{cases}
\]

For any stage $t \geq 0$, define $\bar{\vx}^t_i = c_v(\vx^t_i, \vy^t_i)$.

\paragraph{Claim 3:} For any stage $t \geq 0$, the pairs $(\vx_1^t,\vy_1^t),\dots,(\vx_{N_t}^t,\vy_{N_t}^t)$, and
$(F_v(\bar{\vx}_1^t),\vy_1^t),\dots,(F_v(\bar{\vx}_{N_t}^t), \vy_{N_t}^t)$ are identically distributed. Furthermore, this remains true after conditioning on the event $\mathcal{G}$.

To prove the claim, it suffices to show that the pairs in (1) $(\vx_1^t,\vy_1^t),\dots,(\vx_{N_t}^t,\vy_{N_t}^t)$, and
\linebreak (2) $(F_v(\bar{\vx}_1^t),\vy_1^t),(\vx_2^t,\vy_2^t),\dots,(\vx_{N_t}^t,\vy_{N_t}^t)$ are identically
distributed (both before and after conditioning on $\mathcal{G}$). Our analysis will be independent of the index
analyzed, hence the claim will follow by applying the proof inductively on each remaining pair in the second list.

The pairs in $(1)$ correspond exactly to the induced distribution of the algorithm on the stage $t$ variables. We think
of the pairs in $(2)$ as the induced distribution of a modified algorithm on these variables, where the modified
algorithm just runs the normal algorithm and replaces $(x_1^t,y_1^t)$ by $(F_v(c_v(x_1^t,y_1^t)),y_1^t)$ in stage $t$. To
show the distributional equivalence, we show a probability preserving correspondance between runs of the normal and
modified algorithm having the same stage $t$ variables.

For $0 \leq k \leq t$, let the pairs $(x_i^k,y_i^k)$, $i \in [N_k]$, denote a valid run of the normal algorithm through
stage $t$. We label this as run $A$. Let us denote the sequence of ancestors of $(x_1^t,y_1^t)$ in the normal algorithm by
$(x_{a_k}^k,y_{a_k}^k)$ for $0 \leq k \leq t-1$. By definition of this sequence, we have that $x_{a_0}^0 = x_{a_1}^1 =
\dots = x_1^t$. Since the ShortVectors algorithm is deterministic given the initial samples, the probability density
of this run is simply
\begin{equation}
\label{eq:run-a-prob}
\mathrm{d}\Pr\left[\cap_{i \in [N_0]} \set{\vx_i^0 = x_i^0}\right] = \mathrm{d}\Pr\left[\vx_{a_0}^0 = x_1^t\right]
\prod_{i \in [N_0], i \neq a_0} \mathrm{d}\Pr\left[\vx_i^0 = x_i^0\right]
\end{equation}
by the independence of the samples and since $x_{a_0}^0 = x_1^t$. Notice if we condition on the event $\mathcal{G}$,
assuming the run $A$ belongs to $\mathcal{G}$ (i.e. that there are enough \emph{good} pairs at stage $0$), the above
probability density is simply divided by $\Pr[\mathcal{G}]$.

If $x_1^t \notin \intr(C_v^+) \cup \intr(C_v^-)$, note that $F_v(c_v(x_1^t,y_1^t)) = F_v(x_1^t) = x_1^t$, i.e. the
action of $F_v$ and $c_v$ are trivial. In this case, we associate run $A$ with the identical run for the modified
algorithm, which is clearly valid and has the same probability. Now assume that $x_1^t \in C_v^+ \cup C_v^-$. In this
case, we associate run $A$ to two runs of the modified algorithm: $\tilde{A}$. identical to run $A$, $C$. run $A$ with
$(x_{a_k}^k,y_{a_k}^k)$ replaced by $(f_v(x_{a_k}^k),y_{a_k}^k)$ for $0 \leq k \leq t-1$. Note that both of the
associated runs have the same stage $t$ variables as run $A$ by construction. 

We must check that both runs are indeed valid for the modified algorithm. To see this, note that up till stage $t$, the
modified algorithm just runs the normal algorithm. Run $\tilde{A}$ inherents validity for these stages from the fact
that run A is valid for the normal algorithm. To see that run $C$ is valid, we first note that $f_v(x_{a_0}^0) \equiv
x_{a_0}^0 \equiv y_{a_0}^0 \bmod B$ ($B$ is the lattice basis), which gives validity for stage $0$. By design of the
normal sieving algorithm, note that during run $A$, the algorithm never inspects the contents of $x_{a_k}^k$ for $0 \leq
k < t$. Therefore, if $(x_{a_k}^k,y_{a_k}^k)$ denotes a valid ancestor sequence in run $A$, then so does
$(f_v(x_{a_k}),y_{a_k}^k)$ in run $B$ for $0 \leq k < t$. For stage $t$, note that the normal algorithm, given the stage
$0$ inputs of run $\tilde{A}$ would output $(x_1^t,y_1^t),$ $\dots$$,(x_{N_t},y_{N_t})$ for the stage $t$ variables, and
that given the stage $0$ inputs of run $C$ would output $(f_v(x_1^t),y_1^t),(x_2^t,y_2^t),$ $\dots$ 
$,(x_{N_t}^t,y_{N_t}^t)$. Hence in run $\tilde{A}$, the modified algorithm retains the normal algorithms output, and in
run $C$, it swiches it from $f_v(x_1^t)$ back to $x_1^t$. Therefore, both runs are indeed valid for the modified
algorithm. Furthermore, note that if run $A$ is in $\mathcal{G}$, then both the stage $0$ variables of $\tilde{A}$ and
$C$ index a \emph{good} run for the normal algorithm since the pair $(x_{a_0}^0,y_{a_0})$ is \emph{good} iff
$(f_v(x_{a_0}^0),y_{a_0})$ is \emph{good}. Hence we see that correspondance described is valid both before and after
conditioning on $\mathcal{G}$. Lastly, it is clear that for any run of the modified algorithm, the above correspondance
yields a unique run of the normal algorithm.

It remains to show that the correspondance is probability preserving. We must therefore compute the probability density
associated with the union of run $\tilde{A}$ and $C$ for the modified algorithm. Using the analysis from the previous
paragraph and the computation in~\eqref{eq:run-a-prob}, we see that this probability density is
\begin{align}
\label{eq:corr-prob}
\begin{split}
\Big(\mathrm{d}\Pr[\vx_{a_0}^0 = x_1^t]\Pr[F_v(c_v(x_1^t,y_1^t)) = &x_1^t] 
 + \mathrm{d}\Pr[\vx_{a_0}^0 = f_v(x_1^t)]\Pr[F_v(c_v(f_v(x_1^t),y_1^t)) = x_1^t]\Big) \\
&\prod_{i \in [N_0], i \neq a_0} d\Pr\left[\vx_i^0 = x_i^0\right]
\end{split}
\end{align}
On the first line above, the first term corresponds to run $\tilde{A}$ which samples $x_1^t$ in stage $0$ and then
chooses to keep $(x_1^t,y_1^t)$ in stage $t$, and the second term corresponds to $C$ which samples $f_v(x_1^t)$ in
stage $0$ and chooses to flip $(f_v(x_1^t),y_1^t)$ to $(x_1^t,y_1^t)$ in stage $t$. Now by definition of $F_v$ and $c_v$,
we have that
\[
\Pr[F_v(c_v(f_v(x_1^t),y_1^t)) = x_1^t] = \Pr[F_v(c_v(x_1^t,y_1^t)) = x_1^t] = \frac{w(x_1^t)}{w(x_1^t)+w(f_v(x_1^t))}
\]
Therefore, using the above and Equation~\eqref{eq:density}, we have that the first line of~\eqref{eq:corr-prob} is
equal to
\begin{align*}
\left(\mathrm{d}\Pr[\vx_{a_0}^0 = x_1^t] + \mathrm{d}\Pr[\vx_{a_0}^0 = f_v(x_1^t)]\right)
\frac{w(x_1^t)}{w(x_1^t)+w(f_v(x_1^t))} 
&= \left(\frac{w(x_1^t)}{2\vol(\beta C)} + \frac{w(f_v(x_1^t))}{2\vol(\beta
C)}\right)\frac{w(x_1^t)}{w(x_1^t)+w(f_v(x_1^t))} \\ 
&= \frac{w(x_1^t)}{2\vol(\beta C)} = \mathrm{d}\Pr[\vx_{a_0}^0 = x_1^t]
\end{align*}
Hence the probabilities in Equations~\eqref{eq:run-a-prob} and~\eqref{eq:corr-prob} are equal as needed. Lastly, note
that when conditioning on $\mathcal{G}$, both of the corresponding probabilities are divided by $\Pr[\mathcal{G}]$, and
hence equality is maintained.

\paragraph{Output Analysis:}

\paragraph{Claim 4:} Let $T$ denote the last stage of the sieve. Then conditioned on the event $\mathcal{G}$, with
probability at least $1-\left(\frac{2}{3}\right)^{\frac{1}{2}N_G}$ there exists a lattice vector $w \in
\set{\vy_i^T-\vx_i^T-(\vy_j^T-\vx_j^T): i,j \in [\vN_t]}$ satisfying \linebreak $(\dagger)~ w \in L \setminus M$, $w-v
\in M \cap L$ and $\|w-v\|_C < \eps \beta$. Furthermore, any lattice vector satisfying $(\dagger)$ is a
$(1+\eps)$-approximate solution to SAP.

Let $(x_1^T,y_1^T),\dots,(x_{N_t}^T,y_{N_T}^T)$ denote any valid instantiation of the stage $T$ variables corresponding
to a \emph{good} run of the algorithm (i.e. one belonging to $\mathcal{G}$). Let $(\bar{x}_i^T,y_i^T) =
(c_v(x_i^T,y_i^T),y_i^T)$ for $i \in [N_T]$. By claim 3, it suffices to prove the claim for the pairs
$(F_v(\bar{x}_1^T),y_1^T),$ $\dots$ $,(F_v(\bar{x}_{N_T}^T),y_{N_T}^T)$. This follows since the probability of
``success'' (i.e. the existence of the desired vector) conditioned on $\mathcal{G}$, is simply an average over all
instantiations above of the conditional probability of ``success''.

Since our instantiation corresponds to a \emph{good} run, by Claim $2$ we have at least $N_G$ \emph{good} pairs
in stage $T$. Since $c_v$ preserves \emph{good} pairs, the same holds true for $(\bar{x}_i^T,y_i^T)$, $i \in [N_t]$.
For notational convenience, let us assume that the pairs $(\bar{x}_i^T,y_i^T)$, $i \in [N_G]$ are all \emph{good}.
We note then that $f_v(\bar{x}_i^T) = \bar{x}_i^T \pm v$ and $f_v(f_v(\bar{x}_i^T)) = \bar{x}_i^T$ for $i \in [N_G]$.

First, since $T$ is last the stage, we know that $\|y_i^T\|_C^* \leq 3\beta$ for $i \in [N_G]$. Next, for $i \in [N_G]$, by definition of $c_v$ we have that
\[
\|y_i^T-\bar{x}_i^T\|_C^* = \min \set{\|y_i^T-x_i^T\|_C^*, \|y_i^T-f_v(x_i^T)\|_C^*}
\]
Let $s = s(y_i^T)$, i.e. $\|y_i^T\|_C^* = \|y_i^T\|_{sC}$. Since $(x_i^T,y_i^T)$ is \emph{good} at least one
of $-x_i^T,-f_v(x_i^T) \in \beta sC$. Without loss of generality, we assume $-x_i^T \in \beta sC$. Therefore,
we get that
\begin{equation}
\label{eq:size-bnd}
\|y_i^T-\bar{x}_i^T\|_C^* \leq \|y_i^T-x_i^T\|_C^* \leq \|y_i^T-x_i^T\|_{sC} \leq \|y_i^T\|_{sC} + \|-x_i^T\|_{sC} \leq 3\beta + \beta = 4\beta
\end{equation}
Let $S$ denote the set $\set{y_i^T-\bar{x}_i^T: i \in [N_G]}$. Since $\bar{x}_i^T \equiv y_i^T \bmod B$, we note that $S
\subseteq L$. Also by Equation~\eqref{eq:size-bnd} we have that $S \subseteq 4\beta(C \cup -C) \cap L$. Let $\Lambda
\subseteq S$ denote a maximal subset such that $x + \intr(\frac{\eps}{2} \beta (C \cap -C)) \cap y + \intr(
\frac{\eps}{2} \beta (C \cap -C)) = \emptyset$ for distinct $x,y \in \Lambda$. Since $S \subseteq 4 \beta(C \cup -C)$,
we see that for $x \in S$
\[
x + \frac{\eps}{2} \beta (C \cap -C) \subseteq 4 \beta (C \cup -C) + \frac{\eps}{2} \beta (C \cap -C)
\subseteq \left(4 + \frac{\eps}{2}\right) \beta (C \cup -C)
\]
Therefore we see that
\begin{align*}
|\Lambda| &\leq \frac{\vol((4+\frac{\eps}{2})\beta (C \cup -C))}{\vol(\frac{\eps}{2} ~ \beta (C \cap -C))} 
          = \left(\frac{8+\eps}{\eps}\right)^n \frac{\vol(C \cup -C)}{\vol(C \cap -C)} \\
          &\leq \left(\frac{8 + \eps}{\eps}\right)^n \frac{2 \vol(C)}{\gamma^n \vol(C)} 
           \leq 2\left(\frac{9}{\gamma \eps}\right)^n \leq \frac{1}{2}N_G
\end{align*}
Since $\Lambda$ is maximal, we note that for any $x \in S$, there exists $y \in \Lambda$ such that
\begin{equation}
\label{eq:covers}
\intr(x + \frac{\eps}{2} \beta (C \cap -C)) \cap \intr(y + \frac{\eps}{2} \beta (C \cap -C)) \neq \emptyset
\Leftrightarrow \|x-y\|_{C \cap -C} < \eps \beta
\end{equation}
Let $c_1,\dots,c_{|\Lambda|} \in [N_G]$, denote indices such that $\Lambda = \set{y_{c_i}^T-\bar{x}_{c_i}^T: 1 \leq i \leq
|\Lambda|}$, and let $C = \set{c_j: 1 \leq j \leq |\Lambda|}$. For $j \in \set{1,\dots,|\Lambda|}$, recursively define
the sets
\begin{equation}
\label{eq:partition}
I_j = \set{i \in [N_G]: \|(y_i^T-\bar{x}_i^T)-(y_{c_j}^T-\bar{x}_{c_j}^T)\|_{C \cap -C} < \eps \beta} 
\setminus \left(C ~ \cup ~ (\cup_{k=1}^{j-1} I_k)\right)
\end{equation}
Given Equation \ref{eq:covers}, we have by construction that the sets $C,I_1,\dots,I_{|\Lambda|}$ partition $[N_G]$.
For each $j \in \set{1,\dots,|\Lambda|}$, we examine the differences
\[
S_j = \pm \set{ (y_i^T - F_v(\bar{x}_i^T)) - (y_{c_j}^T-F_v(\bar{x}_{c_j}^T)): i \in I_j}
\]
We will show that $S_j$ fails to contain a vector satisfying $(\dagger)$ with probability at most
$\left(\frac{2}{3}\right)^{|I_j|}$. First we note that $S_j \subseteq L$ since $y_i^T \equiv \bar{x}_i^T \equiv
F_v(\bar{x}_i^T) \bmod B$ for $i \in [N_G]$.

We first condition on the value of $F_v(\bar{x}_{c_j}^T)$ which is either $\bar{x}_{c_j}^T,\bar{x}_{c_j}^T-v$ or
$\bar{x}_{c_j}^T+v$. We examine the case where $F_v(\bar{x}_{c_j}^T) = \bar{x}_{c_j}^T$, the analysis for the other two
cases is similar. Now, for $i \in I_j$, we analyze the difference
\begin{equation}
\label{eq:rand-vec}
y_i^T-F_v(\bar{x}_i^T) - (y_{c_j}^T-\bar{x}_{c_j}^T)
\end{equation}
Let $\delta_i = (y_i^T-\bar{x}_i^T) - (y_{c_j}^T-\bar{x}_{c_j}^T)$. Depending on the output of $F_v(\bar{x}_i^T)$, note
that the vector~\eqref{eq:rand-vec} is either (a) $\delta_i$ or of the form (b) $\pm v + \delta_i$ (since
$f_v(\bar{x}_i^T) = \bar{x}_i^T \pm v)$. We claim that a vector of form (b) satisfies $(\dagger)$. To see this, note
that after possibly negating the vector, it can be brought to the form $v \pm \delta_i \in L$, where we have that 
\begin{equation}
\label{eq:near-opt}
\|\pm \delta_i\|_C < \|\pm \delta_i\|_{C \cap -C} = \|\delta_i\|_{C \cap -C} 
                   < \eps \beta \leq \eps \lambda(C,L,M) < \lambda(C,L,M) \text{,}
\end{equation}
since $i \in I_j$ and $\eps \leq \frac{1}{2}$. Since $\delta_i \in L$ and $\|\delta_i\|_C < \lambda(C,L,M)$, we must
have that $\pm \delta_i \in M \cap L$. Next, since $v \in L \setminus M$ and $\pm \delta_i \in M$, we have that $v \pm
\delta_i \in L \setminus M$. Lastly, note that 
\[
\|v \pm \delta_i\|_C \leq \|v\|_C + \|\pm \delta_i\|_C < \lambda(C,L,M) + \eps \beta \leq (1+\eps) \lambda(C,L,M)
\]
as required.

Now the probability the vector in~\eqref{eq:rand-vec} is of form (b) is
\[
\Pr[F_v(\bar{x}_i^T) = f_v(\bar{x}_i^T)] = \frac{w(f_v(\bar{x}_i^T))}{w(\bar{x}_i^T) + w(f_v(\bar{x}_i^T))} \geq
\frac{1}{3}
\]
since for any $x \in \beta(C \cup -C)$ we have that $1 \leq w(x) \leq 2$. Since each $i \in I_j$ indexes a vector in
$S_j$ not satisfying $(\dagger)$ with probability at most $1-\frac{1}{3} = \frac{2}{3}$, the probability that $S_j$
contains no vector satisfying $(\dagger)$ is at most $\left(\frac{2}{3}\right)^{|I_r|}$ (by independence) as needed.

Let $F_j$, $j \in \set{1,\dots,|\Lambda|}$, denote the event that $S_j$ does not contain a vector satisfying
$(\dagger)$. Note that $F_j$ only depends on the pairs $(F_v(\bar{x}_{c_j}^T),y_{c_j}^T)$ and $(F_v(\bar{x}_i^T),y_i^T)$
for $i \in I_j$. Since the sets $I_1,\dots,I_{|\Lambda|},C$ partition $[N_G]$, these dependencies are all disjoint, and
hence the events are independent. Therefore the probability that none of $S_1,\dots,S_{|\Lambda|}$ contain
a vector satisfying $(\dagger)$ is at most
\[
\Pr[\cap_{j=1}^{|\Lambda|} F_j] \leq \prod_{j=1}^{|\Lambda|} \left(\frac{2}{3}\right)^{|I_j|}
= \left(\frac{2}{3}\right)^{N_G - |\Lambda|} \leq \left(\frac{2}{3}\right)^{\frac{1}{2}N_G}
\]
as needed.

\paragraph{Runtime and Failure Probability:} We first analyze the runtime. First, we make $O(n \log \frac{R}{r})$
guesses for the value of $\Lambda(C,L,M)$. We run the ShortVectors algorithm once for each such guess $\beta$. During
one iteration of the sieving algorithm, we first generate $\vN_0 ~~ \eta$-uniform samples from $\beta C$ (line 3). By
Theorem \ref{thm:unif-sampler}, this takes $\poly(n,\ln \frac{1}{\eta}, \ln \beta, \ln R, \ln r)$ time per sample. We
also mod each sample by the basis $B$ for $L$, which takes $\poly(|B|)$ time ($|B|$ is the bit size of the basis). Next,
by the analysis of Claim $2$, we apply the sieving procedure at most $\ceil{6 \ln \frac{D}{\beta}}$ times (runs of while
loop at line 5), where each iteration of the sieving procedure (line 6) takes at most $O(\vN_0
\left(\frac{5}{\gamma}\right)^n)$ time by Lemma \ref{lem:basic-sieve}. Lastly, we return the set of differences (line
8), which takes at most $O(\vN_0^2)$ time. Now by standard arguments, one has that the values $D$ and $\beta$ (for each
guess) each have size (bit description length) polynomial in the input, i.e. polynomial in $|B|$ (bit size of the basis
of $L$), $n$, $\ln R$, $\ln r$. Since $\vN_0 = O(\ln(\frac{D}{\beta})(\frac{36}{\gamma^2 \eps})^n)$, we have that the
total running time is 
\[
\poly(n,\ln R, \ln r, |B|, \nicefrac{1}{\eps})~O\left(\frac{1}{\gamma^4 \eps^2}\right)^n \text{ as needed. }
\]

We now analyze the success probability. Here we only examine the guess $\beta$, where $\beta \leq \lambda(C,L,M) \leq
\frac{3}{2}\beta$. Assuming perfectly uniform samples over $\beta C$, by the analysis of Claim $4$, we have that
conditioned on $\mathcal{G}$, we fail to output a $(1+\eps)$ approximate solution to SAP with probability at most
$\left(\frac{2}{3}\right)^{\frac{1}{2}N_G}$. Hence, under the uniform sampling assumption, the total probability of failure
is at most 
\[
\left(\frac{2}{3}\right)^{\frac{1}{2}N_G} + \Pr[\mathcal{G}^c] \leq \left(\frac{2}{3}\right)^{\frac{1}{2}N_G}
+ e^{-\frac{1}{48}\gamma^n\vN_0} \ll 2^{-(n+1)}
\]
by Claim $2$. When switching to $\eta$-uniform samples, as argued in the preliminary analysis, this failure probability
increases by at most the total variation distance, i.e. by at most $\eta \vN_0 = 2^{-(n+1)}$. Therefore, the algorithm
succeeds with probability at least $1 - 2^{-(n+1)} - 2^{-(n+1)} = 1 - 2^{-n}$ as needed.

\paragraph{Exact SAP:} Here we are given the guarantee that $\lambda(C,L,M) \leq t \lambda_1(C,L)$, and we wish to use
our SAP solver to get an exact minimizer to the SAP. To solve this, we run the approximate SAP solver on $C,L,M$ with
parameter $\eps = \frac{1}{t}$, which takes $O(t^2/\gamma^4)^n$ time. Let $v \in L \setminus M$ be a lattice vector
satisfying $\|v\|_C = \lambda(C,L,M)$. By Claim $4$, with probability at least $1-2^{-n}$ we are guaranteed to output a
lattice vector $w \in L \setminus M$, such that 
\[
\|w-v\|_C < \eps \lambda(C,L,M) \leq \left(\frac{1}{t}\right) t \lambda_1(C,L) = \lambda_1(C,L)
\]
However, since $w-v \in L$ and $\|w-v\|_C < \lambda_1(C,L)$, we must have that $w-v = 0$. Therefore $w = v$
and our SAP solver returns an exact minimizer as needed.
\end{proof}

\subsection{Closest Vector Problem}
\label{sec:cvp}

In this section, we present a reduction from Approximate-CVP to Approximate-SAP for general semi-norms. In
\cite{DBLP:conf/icalp/BlomerN07}, it is shown that $\ell_p$ CVP reduces to $\ell_p$ SAP in one higher dimension. By
relaxing the condition that the lifted SAP problem remain in $\ell_p$, we give a very simple reduction which reduces CVP
in any semi-norm to SAP in one higher dimension under a different semi-norm that is essentially as symmetric. Given
the generality of our SAP solver, such a reduction is suffices.

\begin{theorem}[Approximate-CVP] Take $x \in \R^n$. Then for any $\eps \in (0,\frac{1}{3})$, $y \in L$ satisfying
$\|y-x\|_C \leq (1+\eps)d_C(L,x)$ can be computed in time $O(\frac{1}{\gamma^4 \eps^2})^n$ with probability at least
$1-2^{-n}$. Furthermore, if $d_C(L,x) \leq t \lambda_1(C,L)$, $t \geq 2$, then a vector $y \in L$ satisfying
$\|y-x\|_C = d_C(L,x)$ can be computed in time $O(\frac{t^2}{\gamma^4})^n$ with probability at least $1-2^{-n}$.
\label{thm:approx-cvp}
\end{theorem}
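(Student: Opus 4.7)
The plan is to reduce Approximate CVP to Approximate SAP in dimension $n+1$, along the lines of the Bl\"omer--Naewe reduction for $\ell_p$ norms but adapted to an asymmetric semi-norm. The main difficulty in extending their approach is that with a symmetric lifted body such as $C \times [-1,1]$, the SAP objective would see only $\min\{d_C(L,x),\, d_C(L,-x)\}$, which is in general not $d_C(L,x)$; the fix is to use an asymmetric interval in the last coordinate.

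Given a guess $\beta > 0$ for $d_C(L,x)$, the construction is: let $L' = B' \Z^{n+1}$ with
\[
B' \;=\; \begin{pmatrix} B & -x \\ 0 & \beta \end{pmatrix},
\]
so that $L' = \set{(y - kx,\, k\beta) : y \in L,\, k \in \Z}$, let $M = \R^n \times \set{0}$ (so $L' \setminus M$ consists of vectors with $k \neq 0$), and let $C' = C \times [-\delta, 1]$ with $\delta = 1/3$. A direct gauge computation gives $\|(v,w)\|_{C'} = \max(\|v\|_C, w)$ for $w \geq 0$ and $\|(v,w)\|_{C'} = \max(\|v\|_C, -w/\delta)$ for $w \leq 0$. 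Thus $k=1$ vectors have norm $\max(\|y-x\|_C, \beta)$, $k=-1$ vectors have norm at least $\beta/\delta = 3\beta$, and $|k| \geq 2$ vectors have norm at least $2\beta$.

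For $\beta$ in the range $((1+\eps) d_C(L,x)/2,\, d_C(L,x)]$ (non-empty since $\eps < 1$), one checks that $\lambda(C', L', M) = d_C(L,x)$ is attained at $k=1$, and that every $k \neq 1$ branch has norm strictly exceeding $(1+\eps) d_C(L,x)$. Since $d_C(L,x)$ is unknown, I would first compute a crude upper bound $\nu \geq d_C(L,x)$ of the form $\nu \leq 2^n (R/r) d_C(L,x)$, analogous to Lemma~\ref{lem:guess}, and then try $\beta = (2/3)^i \nu$ over $O(n \log(R/r))$ guesses, invoking Theorem~\ref{thm:approx-sap} on each and returning the output $y$ minimizing $\|y-x\|_C$. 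A volume calculation shows
\[
\frac{\vol(C' \cap -C')}{\vol(C')} \;=\; \frac{\vol(C \cap -C)}{\vol(C)}\cdot\frac{2\delta}{1+\delta} \;\geq\; \tfrac{1}{2}\gamma^n,
\]
so $C'$ is $\gamma'$-symmetric with $\gamma' = \Theta(\gamma)$, and the SAP call takes $O(1/(\gamma'^4 \eps^2))^{n+1} = O(1/(\gamma^4 \eps^2))^n$ time, yielding $w = (y-x, \beta)$ with $\|y-x\|_C \leq \|w\|_{C'} \leq (1+\eps) d_C(L,x)$.

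For the exact case $d_C(L,x) \leq t\lambda_1(C,L)$ with $t \geq 2$, the bound $\lambda_1(C',L') \leq \lambda_1(C,L)$ (via the embedded vectors $(e,0)$ for $e \in L \setminus \set{0}$) gives $\lambda(C',L',M) = d_C(L,x) \leq t \lambda_1(C',L')$, so the exact variant of Theorem~\ref{thm:approx-sap} returns the true closest vector in $O(t^2/\gamma^4)^n$ time. The main technical issue is the coupled choice of $\delta$ and $\beta$: $\delta$ must be strictly less than $1/2$ so that the $k=-1$ branch is penalized enough to force the SAP solver onto $k=1$, yet bounded away from $0$ so that $\gamma' = \Theta(\gamma)$ and the SAP running time bound translates back cleanly to CVP.
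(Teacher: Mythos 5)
Your reduction is essentially the paper's: a Kannan-style lifting to dimension $n+1$ with an asymmetric interval in the last coordinate (the paper takes $C' = C \times [-\frac{1}{\beta},\frac{1}{2\beta}]$ with lifting vector $(-x,1)$, which is your construction up to a diagonal rescaling and the choice $\delta=\frac{1}{2}$ rather than $\frac{1}{3}$), the same guessing of $\beta$ to within a constant factor of $d_C(L,x)$, the same volume computation showing $C'$ is $\Theta(\gamma)$-symmetric, and the same case analysis on the last coordinate forcing $k=1$ for any near-optimal SAP solution. One correction for the exact case: you need a \emph{lower} bound on $\lambda_1(C',L')$, not the upper bound $\lambda_1(C',L') \le \lambda_1(C,L)$ you invoke; the paper establishes $\lambda_1(C',L') = \min\set{d_C(L,x),\lambda_1(C,L)}$, whence $\lambda(C',L',M) = d_C(L,x) \le t\min\set{d_C(L,x),\lambda_1(C,L)} = t\lambda_1(C',L')$ using $t\ge 2$ and the hypothesis $d_C(L,x)\le t\lambda_1(C,L)$. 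With that sign of the inequality fixed (the lower bound follows from the same case analysis on $k$), your argument goes through.
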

\begin{proof}
To show the theorem, we use a slightly modified version of Kannan's lifting technique to reduce CVP to SAP.
Let us define $L' \subseteq \R^{n+1}$ as the lattice generated by $(L,0)$ and $(-x,1)$.

In the standard way, we first guess a value $\beta > 0$ satisfying $\beta \leq d_C(L,x) \leq \frac{3}{2} ~ \beta$. Now
let $C' = C \times [-\frac{1}{\beta},\frac{1}{2\beta}]$. For $(y,z)$, $y \in \R^n$, $z \in \R$, we have that
\[
\|(y,z)\|_{C'} = \max \set{\|y\|_C, \beta z, -2 \beta z}
\]
Also, note that $C' \cap -C' = (C \cap -C) \times [-\frac{1}{2\beta}, \frac{1}{2\beta}]$. Now
we see that $\vol(C' \cap -C') = \frac{1}{\beta} \vol(C \cap -C)$ and $\vol(C') = \frac{3}{(2\beta)} \vol(C)$.
Therefore, we get that 
\[
\vol(C' \cap -C') = \frac{1}{\beta} \vol(C \cap -C) \geq \frac{1}{\beta} ~ \gamma^n \vol(C) = 
\frac{2}{3} ~ \gamma^n \vol(C') 
\]
Hence $C'$ is $\gamma(1-1/n)$-symmetric.  Let $M = \set{y \in \R^{n+1}: y_{n+1} = 0}$. Define $m: L \rightarrow L'
\setminus M$ by $m(y) = (y-x,1)$, where it is easy to see that $m$ is well-defined and injective. Define 
\[
S = \set{y \in L: \|y-x\|_C \leq (1+\eps)d_C(L,x)} \quad \text{ and } \quad  S' = \set{y \in L' \setminus M: \|y\|_C'
\leq (1+\eps) \lambda(C',L',M)} \text{.}
\]
We claim that $m$ defines a norm preserving bijection between $S$ and $S'$. Taking $y \in L$, we see that
\[
\|m(y)\|_{C'} = \|(y-x,1)\|_{C'} = \max \set{ \|y-x\|_C, \beta, -2\beta} = \|y-x\|_C
\]
since $\beta \leq d_C(L,x) \leq \|y-x\|_C$ by construction. So we have that $\|m(y)\|_{C'} = \|y-x\|_C$, and hence
$\lambda(C',L',M) \leq \inf_{y \in L} \|y-x\|_C = d_C(L,x)$. Next take $(y,z) \in L' \setminus M$, $y \in \R^n$, $z \in
\R$, such that $\|(y,z)\|_{C'} \leq (1+\eps) \lambda(C',L',M)$. We claim that $z = 1$. Assume not, then since $(y,z) \in
L' \setminus M$, we must have that either $z \geq 2$ or $z \leq -1$. In either case, we have that
\[
\|(y,z)\|_{C'} = \max \set{\|y\|_C, \beta z, - 2\beta z} \geq \max \set{ \beta z, -2\beta z} \geq 2\beta 
\]
Now since $\beta \leq d_C(L,x) \leq \frac{3}{2} ~\beta$, $\eps \in (0,\frac{1}{3})$, and that
$\lambda(C',L',M) \leq d_C(L,x)$, we get that
\[
\|(y,z)\|_{C'} \geq 2 \beta = (1+\frac{1}{3})(\frac{3}{2}\beta) \geq (1+\frac{1}{3})d_C(L,x) > (1+\eps)d_C(L,x) \geq
(1+\eps)\lambda(C',L',M)
\]
a clear contradiction to our initial assumption. Since $z = 1$, we may write $y = w-x$ where $w \in L$. Therefore,
we see that
\[
\|(y,z)\|_{C'} = \|(w-x,1)\|_{C'} = \max \set{\|w-x\|_C, \beta, -2\beta} = \|w-x\|_C
\]
since $\|w-x\|_C \geq d_C(L,x) \geq \beta$. So we have that $(1+\eps) \lambda(C',L',M) \geq \|(y,z)\|_{C'} = \|w-x\|_C \geq
d_C(L,x)$. Since the previous statement still holds when choosing $\eps = 0$, we must have that $\lambda(C',L',M) \geq
d_C(L,x)$ and hence $\lambda(C',L',M) = d_C(L,x)$.

From the above, for $y \in S$, we have that $\|m(y)\|_{C'} = \|y-x\|_C \leq (1+\eps) d_C(L,x) = (1+\eps)
\lambda(C',L',M)$, and hence $m(y) \in S$ as needed. Next if $(y,z) \in S'$, from the above we have that $z = 1$, and
hence $(y,z) = (w-x,1)$ where $w \in L$. Therefore $(y,z) = m(w)$, where $\|w-x\|_C = \|(y,z)\|_{C'} \leq
(1+\eps)\lambda(C',L',M) = (1+\eps)d_C(L,x)$, and hence $w \in S$. Since the map $m$ is injective, we get that
$m$ defines a norm preserving bijection between $S$ and $S'$ as claimed.

Hence solving $(1+\eps)$-CVP with respect to $C,L,x$ is equivalent to solving $(1+\eps)$-SAP with respect to $C',L',M$.
We get the desired result by applying $1+\eps$ approximation algorithm for SAP described in theorem \ref{thm:approx-sap}.

For exact CVP, we are given the guarantee that $d_C(L,x) \leq t \lambda_1(C,L)$. From analysis above, we see that
\[
\lambda_1(C',L') = \min \set{\lambda_1(C',L',M), \inf_{y \in L \setminus \set{0}} \|(y,0)\|_{C'}}
                 = \min \set{d_C(L,x), \lambda_1(C,L)}
\]
Therefore 
\[
\lambda(C',L',M) = d_C(L,x) = \min \set{d_C(L,x), t \lambda_1(C,L)} \leq t \min \set{d_C(L,x), \lambda_1(C,L)}
               = t \lambda_1(C',L')
\]
Hence we may again use the SAP solver in theorem \ref{thm:approx-sap} to solve the exact CVP problem in
$O(\frac{t^2}{\gamma^4})^n$ time with probability at least $1-2^{-n}$ as required.
\end{proof}


\section{Conclusions and Open Problems}
\label{sec:conclusion}

In this paper, we have shown that an approximate version of Integer Programming (IP) can be solved via an extension of
the AKS sieving techniques to general semi-norms. Furthermore, we give algorithms to solve both the
$(1+\eps)$-Approximate Subspace Avoiding Problem (SAP) and Closest Vector Problem (CVP) in general semi-norms using
these techniques. Due to the reliance on a probabilistic sieve, the algorithms presented here only guarantee the
correctness of their outputs with high probability. In \cite{Dadush-et-al-SVP-11}, it was shown that this shortcoming
can sometimes be avoided, by giving a Las Vegas algorithm for SVP in general norms (which is deterministic for $\ell_p$
norms) achieving similar asymptotic running times as the AKS sieve based methods. The following question is still open:

\paragraph{{\Large Problem:}} Does there exists a Las Vegas or deterministic algorithm for $(1+\eps)$-SAP or CVP
in general (semi-)norms achieving the same asymptotic running time as the AKS sieve based methods?
\vspace{1em}

As for potential improvements in the complexity of $(1+\eps)$-SAP / CVP, the following question is open:

\paragraph{{\Large Problem:}} Can the complexity of the AKS sieve based methods for $(1+\eps)$-SAP / CVP
in general (semi-)norms be reduced to $O(\frac{1}{\gamma^2\eps})^n$?

\section{Acknowledgments}
I would like to thank my advisor Santosh Vempala for useful discussions relating to this problem.


\bibliographystyle{alphaabbrvprelim}
\bibliography{lattices,acg,cg}

\appendix

\section{Appendix}
\label{sec:appendix}

\begin{proof}[Proof of Lemma \ref{lem:barycenter-approx} (Approx. Barycenter)]
Let $X_1,\dots,X_N$ denote iid uniform samples over $K \subseteq \R^n$, where $N = \left(\frac{2c n}{\eps}\right)^2$.
We will show that for $b = \frac{1}{N} \sum_{i=1}^n X_i$, that the following holds
\begin{equation}
\label{eq:far-away-prob}
\Pr[\|\pm(b-b(K))\|_{K-b(K)} > \eps] \leq 4^{-n}
\end{equation}
Since the above statement is invariant under affine transformations, we may assume $K$ is isotropic,
i.e. $b(K) = \E[X_1] = 0$, the origin, and $\E[X_1X_1^t] = I_n$, the $n \times n$ identity. Since $K$ is
isotropic, we have that $B_2^n \subseteq K$ (see \cite{KLS95}). Therefore to show~\eqref{eq:far-away-prob}
it suffices to prove that $\Pr[\|b\|_2 > \eps] \leq 4^{-n}$. Since the $X_i$'s are iid isotropic random
vectors, we see that $\E[b] = \frac{1}{N} \sum_{i=1}^N \E[X_i] = 0$ and
\[
\E[bb^t] = \frac{1}{N^2} \sum_{i,j \in [N]} E[X_iX_j^t] = \frac{1}{N^2} \sum_{i=1}^n E[X_iX_i^t] = \frac{1}{N} I_n
\]
Now since the $X_i$s are log-concave, we have that $b$ is also log-concave (since its distribution is a convolution of
log-concave distributions). Now, given that $b$ has covariance matrix $\frac{1}{N} I_n$, by the concentration inequality of Paouris \cite{Paouris-Concentration-06}, we have that
\[
\Pr[\|b_2\|_2 > \eps] = \Pr[\|b_2\|_2 > 2c\frac{n}{\sqrt{N}}] < e^{-2n} < 4^{-n}
\]
as claimed. To prove the theorem, we note that when switching the $X_i$'s from truly uniform to $4^{-n}$ uniform,
the above probability changes by at most $\frac{c n^2}{\eps^2} 4^{-n}$ by Lemma \ref{lem:tvd}. Therefore
the total error probability under $4^{-n}$-uniform samples is at most $2^{-n}$ as needed.
\end{proof}

\begin{proof}[Proof of Lemma \ref{lem:cent-comp} (Estimates for semi-norm recentering)]
We have $z \in \R^n$, $x,y \in K$ satisfying \linebreak $(\dagger) ~ \|\pm(x-y)\|_{K-y} \leq \alpha < 1$. We
prove the statements as follows:
\begin{enumerate}
\item 
\[
\|z-y\|_{K-y} \leq \tau \Leftrightarrow (z-y) \in \tau(K-y) \Leftrightarrow z \in \tau K + (1-\tau) y
\text{ as needed .}
\]
\item Let $\tau = \|z-x\|_{K-x}$. Then by $(1)$, we have that $z \in \tau K + (1-\tau)x$. Now note
that
\[
(1-\tau)(x-y) \subseteq |1-\tau| \alpha(K-y)
\]
by assumption $(\dagger)$ and $(1)$. Therefore
\begin{align*}
z \in \tau K + (1-\tau)x &= \tau K + (1-\tau)y + (1-\tau)(x-y) 
                   \subseteq \tau K + (1-\tau)y + \alpha |1-\tau|(K-y) \\
                    &= (\tau + \alpha |1-\tau|) K + (1-\tau-\alpha|1-\tau|)y
\end{align*}
Hence by $(1)$, we have that 
\[
\|z-y\|_{K-y} \leq \tau + \alpha |1-\tau| = \|z-x\|_{K-x} + \alpha |1-\|z-x\|_{K-x}|
\]
as needed.
\item We first show that
\[
\pm(y-x) \in \frac{\alpha}{1-\alpha}(K-x)
\]
By $(1)$ and $(\dagger)$ we have that
\begin{align*}
(x-y) \in \alpha(K-y) &\Leftrightarrow (x-y) - \alpha(x-y) \in \alpha(K-y) - \alpha(x-y) \\
                      &\Leftrightarrow (1-\alpha)(x-y) \in \alpha(K-x)   
                      \Leftrightarrow (x-y) \in \frac{\alpha}{1-\alpha}(K-x)
\end{align*}
as needed. Next since $0 \leq \alpha \leq 1$, we have that $|1-2\alpha| \leq 1$. Therefore by $(\dagger)$ we have that
\[
(1-2\alpha)(y-x) \in |1-2\alpha| \alpha (K-y) \subseteq \alpha (K-y)
\]
since $0 \in K-y$. Now note that
\begin{align*}
(1-2\alpha)(y-x) \in \alpha(K-y) &\Leftrightarrow (1-2\alpha)(y-x) + \alpha(y-x) \in \alpha(K-y) + \alpha(y-x) \\
				 &\Leftrightarrow (1-\alpha)(y-x) \in \alpha(K-x)
				 \Leftrightarrow (y-x) \in \frac{\alpha}{1-\alpha}(K-x)
\end{align*}
as needed.

Let $\tau = \|z-y\|_{K-y}$. Then by $(1)$, we have that $z \in \tau K + (1-\tau)y$. Now note that
\begin{align*}
z \in \tau K + (1-\tau)y &= \tau K + (1-\tau)x + (1-\tau)(y-x) 
                   \subseteq \tau K + (1-\tau)x + \frac{\alpha}{1-\alpha}|1-\tau| (K-x) \\
                   &= (\tau + \frac{\alpha}{1-\alpha} |1-\tau|) K + (1-\tau-\frac{\alpha}{1-\alpha}|1-\tau|)x
\end{align*}
Hence by $(1)$, we have that 
\[
\|z-x\|_{K-x} \leq \tau + \frac{\alpha}{1-\alpha} |1-\tau| = \|z-y\|_{K-y} + \frac{\alpha}{1-\alpha} |1-\|z-y\|_{K-y}|
\]
as needed.
\end{enumerate}
\end{proof}

\begin{proof}[Proof of Corollary \ref{cor:sym-stable} (Stability of symmetry)]
We claim that $(1-\|x\|_K)(K \cap -K) \subseteq K-x \cap x-K$. Take $z \in K \cap -K$, then note that
\begin{align*}
\|x + (1-\|x\|_K)z\|_K &\leq \|x\|_K + (1-\|x\|_K)\|z\|_K \leq \|x\|_K + (1-\|x\|_K)\|z\|_{K \cap -K} \\
                       &\leq \|x\|_K + (1-\|x\|_K) = 1
\end{align*}
hence $x + (1-\|x\|_K)(K \cap -K) \subseteq K \Leftrightarrow (1-\|x\|_K)(K \cap -K) \subseteq K-x$.
Next note that
\begin{align*}
\|-x + (1-\|x\|_K)z\|_{-K} &\leq \|-x\|_{-K} + (1-\|x\|_K)\|z\|_{-K} \leq \|x\|_K + (1-\|x\|_K)\|z\|_{K \cap -K} \\
                           &\leq \|x\|_K + (1-\|x\|_K) = 1
\end{align*}
hence $-x + (1-\|x\|_K)(K \cap -K) \subseteq -K \Leftrightarrow (1-\|x\|_K)(K \cap -K) \subseteq x-K$, as needed.
Now we see that
\[
\vol((K-x) \cap (x-K)) \geq \vol((1-\|x\|_K)(K \cap -K)) = (1-\|x\|_K)^n \vol(K \cap -K)
\]
and so the claim follows from Theorem \ref{thm:symmetrize}.
\end{proof} 

\begin{proof}[Proof of Lemma \ref{lem:li} (Intersection Lemma)]
Since $\|v\|_C \leq \frac{3}{2} \beta$, we see that $\frac{v}{2} \in \nicefrac{3}{4} C$. Now we get that
\[
\frac{v}{2} + \nicefrac{1}{4} \beta (C \cap -C) \subseteq \nicefrac{3}{4} \beta C + \nicefrac{1}{4} \beta C
                                                   = \beta C
\]
Furthermore since $\|\frac{v}{2} - v\|_{-C} = \|-\frac{v}{2}\|_{-C} = \nicefrac{1}{2} \|v\|_C \leq
\nicefrac{3}{4} \beta$, we also have that $\frac{v}{2} \in v - \nicefrac{3}{4} C$. Therefore
\[
\frac{v}{2} + \frac{1}{4} \beta (C \cap -C) \subseteq (v - \frac{3}{4} \beta C) + \frac{1}{4} \beta (-C)
                                                   = v - \beta C
\]
We therefore conclude that
\[
\frac{\vol(\beta C \cap (v - \beta C))}{\vol(\beta C)} \geq \frac{\vol(\frac{v}{2} + \nicefrac{1}{4}\beta (C \cap
-C))}{\vol(C)} = \left(\frac{1}{4}\right)^n \frac{\vol(C \cap -C)}{\vol(C)} \geq \left(\frac{\gamma}{4}\right)^n
\]
as needed.

For the furthermore, we remember that $\|x\|_C = \inf \set{s \geq 0: x \in sC} = \sup \set{\pr{x}{y}: y \in C^*}$ and that
$(-C)^* = -C^*$. Now assume there exists $x \in C^+_v \cap C^-_v$. Then $x = v - \beta k_1 = \beta k_2 -v$ where
$k_1,k_2 \in C$. Choose $y \in C^*$ such that $\pr{y}{v} = \|v\|_C$. Note that
\begin{align*}
\pr{y}{v-\beta k_1 - (\beta k_2 - v)} &= 2\pr{y}{v} - \beta (\pr{y}{k_1} + \pr{y}{k_2}) = 
                                           2\|v\|_C - \beta (\pr{y}{k_1}+\pr{y}{k_2}) \\
                                 &\geq 2\|v\|_C - \beta (\|k_1\|_C + \|k_2\|_C) \geq 2\beta - 2\beta = 0
\end{align*}
Since $v-\beta k_1 - (\beta k_2 - v) = x - x = 0$ by construction, all of the above inequalities must hold at equality.
In particular, we must have that $1 = \|k_1\|_C = \|k_2\|_C = \pr{y}{k_1} = \pr{y}{k_2}$. Since $-y \in (-C)^*$,
we know that
\[
v - \beta C \subseteq \set{x \in \R^n: \pr{-y}{x-v} \leq \beta}
\]
and since $\pr{-y}{(v-\beta k_1) - v} = \beta \pr{y}{k_1} = \beta$, we must have that $v - \beta k_1 \in \partial
C_v^+$. Via a symmetric argument, we get that $\beta k_2 - v \in \partial C_v^-$. Therefore $C_v^+ \cap C_v^- \subseteq
\partial C_v^+ \cap \partial C_v^- \Leftrightarrow \intr(C_v^+) \cap \intr(C_v^-) = \emptyset$, as needed.
\end{proof}

\end{document}
